\newtheorem{theorem}{Theorem}
\newtheorem{lemma}{Lemma}
\newtheorem{observation}{Observation}
\newcommand{\old}[1]{{}}
\newcommand{\later}[1]{{}}
\def\etal{{et~al.}}
\def\eg{{e.g.}}
\def\ie{{i.e.}}
\newcommand{\eps}{\varepsilon}
\newcommand{\NN}{\mathbb{N}}
\newcommand{\RR}{\mathbb{R}}
\def\D{\mathcal D}
\def\H{\mathcal H}
\def\I{\mathcal I}
\def\L{\mathcal L}
\def\R{\mathcal R}
\newcommand{\alg}{{\rm ALG}}
\newcommand{\opt}{{\rm OPT}}
\newcommand{\len}{{\rm len}}
\newcommand{\per}{{\rm per}}
\newcommand{\diam}{{\rm diam}}
\newcommand{\conv}{{\rm conv}}
\newcommand{\area}{{\rm Area}}
\newcommand{\vol}{{\rm Vol}}
\newcommand{\proj}{{\rm proj}}
\title{The Traveling Salesman Problem for Lines, Balls and
  Planes\thanks{A preliminary version has appeared in the Proceedings of
    the 24th ACM-SIAM Symposium on Discrete Algorithms,
    New Orleans, LA, 2013, SIAM, pp.~828--843.}}
\author{Adrian Dumitrescu\thanks{Department of Computer Science,
University of Wisconsin--Milwaukee, WI, USA\@.
Email:~\texttt{dumitres@uwm.edu}.
Research by this author was supported in part by the NSF grant DMS-1001667.}
\qquad
Csaba D. T\'oth\thanks{Department of Mathematics, California State
  University, Northridge, Los Angeles, CA;
and Department of Computer Science, Tufts University, Medford, MA, USA\@.
Email:~\texttt{cdtoth@acm.org}.
Research by this author was supported in part by NSERC (RGPIN 35586)
and NSF (CCF-0830734 and CCF-1423615).}}
\begin{document}

\maketitle

\begin{abstract}
We revisit the traveling salesman problem with neighborhoods
(TSPN) and propose several new approximation algorithms. These constitute either
first approximations (for hyperplanes, lines, and balls in $\RR^d$, for $d\geq 3$)
or improvements over previous approximations achievable in comparable times
(for unit disks in the plane).

\smallskip
(I) Given a set of $n$ hyperplanes in $\RR^d$,
a TSP tour whose length is at most $O(1)$ times the optimal can be computed in
$O(n)$ time, when $d$ is constant.

\smallskip
(II) Given a set of $n$ lines in $\RR^d$, a TSP tour whose length is at most
$O(\log^3 n)$ times the optimal can be computed in polynomial time for all $d$.

\smallskip
(III) Given a set of $n$ unit balls in $\RR^d$, a TSP tour whose length is at most
$O(1)$ times the optimal can be computed in polynomial time, when $d$ is constant.

\medskip
\noindent\textbf{\small Keywords}:
Traveling salesman,
group Steiner tree,
linear programming,
minimum-perimeter rectangular box,
approximation algorithm,
lines, planes,
hyperplanes,
unit disks and balls.
\end{abstract}

\section{Introduction}  \label{sec:intro}

In the Euclidean Traveling Salesman Problem (ETSP), given a set
of points in the plane (or in the Euclidean space $\RR^d$, $d \geq 3$),
one seeks a shortest tour (closed curve) that visits each point.
In the \emph{TSP with neighborhoods} (TSPN), first studied by
Arkin and Hassin~\cite{AH94}, each point is replaced by a
(possibly disconnected) region. The tour must visit
at least one point in each of the given regions (\ie, it
must intersect each region).
A~tour for a set of neighborhoods is also referred to as a TSP tour.
Since ETSP is known to be NP-hard in $\RR^d$
for every $d \geq 2$ \cite{GGJ76,GJ79,P77}, TSPN is also NP-hard
for every $d \geq 2$. TSP is recognized as one of the corner-stone
problems in combinatorial optimization. See~\cite{Mi00,Mi04} for a
list of related problems in geometric network optimization.

\vspace{-5pt}
\paragraph{Related work.}
It is known that ETSP admits a polynomial-time
approximation scheme in $\RR^d$, where $d=O(1)$,
due to Arora~\cite{Ar98} and Mitchell~\cite{Mi99}.
Subsequent running time improvements were obtained
by Rao and Smith~\cite{RS98}; specifically, the running time
of their PTAS is $O(f(\eps) \, n \log{n})$, where $f(\eps)$ grows
exponentially in $1/\eps$. In contrast, TSPN in general is harder to approximate.
Certain instances are known to be APX-hard. Research efforts focused on
approximations for families of neighborhoods with ``nice'' geometric properties.
Typically, improved approximation methods are available when the neighborhoods are
pairwise disjoint, or fat, or have comparable sizes. We briefly review previous
work most closely related to our results.

Arkin and Hassin~\cite{AH94} gave constant-factor approximations for
translates of a convex region, translates of a connected region, and
more generally, for regions with diameters \emph{parallel} to a common direction
and of comparable length (within a constant factor).
Dumitrescu and Mitchell~\cite{DM03} extended the above result to
arbitrary connected neighborhoods with comparable diameters.

For $n$ connected (possibly overlapping) neighborhoods in the plane,
TSPN can be approximated with ratio $O(\log{n})$ by the algorithm of
Mata and Mitchell~\cite{MM95}. See also the survey by Bern and
Eppstein~\cite{BE97} for a short outline of this algorithm.
Subsequent running time improvements were offered
by Elbassioni~\etal~\cite{EFS06} and by Gudmundsson and
Levcopoulos~\cite{GL99}. At its core, the $O(\log{n})$-approximation
relies on the following early result by Levcopoulos and Lingas~\cite{LL84}:
Every (simple) rectilinear polygon $P$ with $n$ vertices, $r$ of which are reflex,
can be partitioned in $O(n \log{n})$ time into rectangles whose total perimeter
is $\log{r}$ times the perimeter of $P$.

Bodlaender~\etal~\cite{BFG+09} gave a PTAS for TSPN for disjoint fat
regions of about the same size (this includes the case of disjoint unit disks)
in $\RR^d$, where $d$ is constant. Earlier Dumitrescu and Mitchell~\cite{DM03}
proposed a PTAS for TSPN for fat regions of about the same size and bounded
depth in the plane, where Spirkl~\cite{Sp14} recently found and filled a gap.

Using an approximation algorithm due to Slavik~\cite{Sla97} for Euclidean group
TSP (see below), de Berg~\etal~\cite{BGK+05} obtained constant-factor approximations
for disjoint fat convex regions in the plane, not necessarily of comparable size.
Elbassioni~\etal~\cite{EFMS05} improved the runtime of the approximation algorithm.
Subsequently, Elbassioni~\etal~\cite{EFS06,EFS09} gave constant-factor
approximations for (possibly intersecting) fat convex regions of comparable size.
Preliminary work by Mitchell gave
(i) a PTAS~\cite{Mi07} for bounded depth fat regions of arbitrary sizes
in the plane; in particular for disjoint fat regions in the plane, and
(ii) constant-factor approximations for pairwise-disjoint connected
neighborhoods of any size or shape~\cite{Mi10}.
Very recently, Chan and Jiang~\cite{CJ16} gave a PTAS for
\emph{fat weakly disjoint regions} in metric spaces of constant doubling
dimension by combining a QPTAS by Chan and Elbassioni~\cite{CE11}
with a PTAS for TSP in doubling metrics by Bartal~\etal~\cite{BGK12}.
(For example, disjoint unit balls in $\RR^d$, $d\geq 2$, are fat weakly
disjoint regions per the definition in \cite{CJ16}, but disjoint balls of arbitrary
radii need not be).
A constant-factor approximation for disks in the Euclidean plane
(with arbitrary radii and overlaps) was obtained in~\cite{DT15}.

Finally, interesting variants are those with unbounded neighborhoods,
such as lines or planes. For TSPN for $n$ lines in the plane, an
exact solution can be found in $O(n^5)$
time~\cite{CJN99,DELM03,THI99,Ta01} (see also~\cite{J02}),
and a $1.28$-approximation can be computed in $O(n)$ time~\cite{Du12}.
In contrast, TSPN for lines in $\RR^3$ is NP-hard.
The status of TSPN for planes in $\RR^3$ appears to be unknown.

Regarding the degree of approximation achievable, TSPN for arbitrary
neighborhoods is generally APX-hard~\cite{BGK+05,SS05}, and it remains
so even for segments of nearly the same length~\cite{EFS06}.
For instance, approximating TSPN for connected regions in
the plane within a factor smaller than 2 is intractable (NP-hard)~\cite{SS05}.
The problem is also APX-hard for disconnected regions~\cite{SS05},
the simplest case being point-pair regions~\cite{DO08}.
It is conjectured that approximating TSPN for disconnected regions in
the plane within a $O(\log^{1/2} n)$ factor is intractable~\cite{SS05}.
Similarly, it is conjectured that approximating TSPN for connected
regions in $\RR^3$ within a $O(\log^{1/2} n)$ factor and for disconnected
regions in $\RR^3$ within a $O(\log^{2/3} n)$ factor~\cite{SS05} are
intractable. Moreover, proving these conjectures seems to require advances in
complexity, rather than geometry.

\vspace{-5pt}
\paragraph{Our results.}
In this paper we present several improved approximation
algorithms for TSPN, for three types of neighborhoods:
(i) hyperplanes in $\RR^d$;
(ii) lines in $\RR^d$;
(iii) congruent disks in the plane and congruent balls in $\RR^d$.
Our results and related older results are summarized in Table~\ref{table}.
\medskip
\begin{table*}[t]
\begin{center}
\begin{tabular}{|l|c|c|c|c|c||} \hline
&Region type & Old ratio & New ratio & NP-hard \\ \hline \hline
1&Hyperplanes in $\RR^d$, $d\geq 3$ & --- & $(1+\eps) \, 2^{d-1}/\sqrt{d} $ & open \\ \hline
2&Planes in $\RR^3$ & --- & $2.31$ in $O(n)$ time & open \\ \hline
3&Lines in $\RR^d$, $d\geq 3$ & --- & $O(\log^3 n)$ & yes \\ \hline
4&Disjoint unit disks in the plane & $3.55$ & --- & yes \\ \hline
5&Unit disks in the plane & $7.62$ & $6.75$ & yes \\ \hline
6&Disjoint unit balls in $\RR^3$ & --- & $7.01 $ & yes \\ \hline
7&Unit balls in $\RR^3$ & --- &  $100.61$ & yes \\ \hline
8&Unit balls in $\RR^d$ & --- &  $ O(7.73^d) $ & yes \\ \hline
9&Disjoint balls in $\RR^d$ &  $O(2^d/\sqrt{d})$  &  --- & yes \\ \hline
\end{tabular}
\end{center}
\caption{Old and new (asymptotic) approximation ratios obtained in polynomial time.
The ratios in rows 4--8 are obtained by using a black box PTAS for computing point tours.
Disjoint unit balls in $\RR^d$, $d\geq 2$, admit a PTAS~\cite{BFG+09,CJ16,DM03,Sp14}.
The old ratios listed in column 2 are from~\cite{DM03} (rows 4,5) and~\cite{EFS09} (row 9).}
\label{table}
\end{table*}

We start with hyperplanes in $\RR^d$; no approximation algorithm
was known for this type of neighborhoods. For constant $d$, we can
compute constant-factor approximations in linear time.

\begin{theorem}\label{thm:planes}
Given a set of $n$ hyperplanes in $\RR^d$, and $\eps>0$,
a TSP tour  whose length is at most $(1+\eps) \, 2^{d-1}/\sqrt{d}$ times
the optimal can be computed in at most $O(C_{d,\eps}\ n)$ time,
where $C_{d,\eps} = d^2 2^{2d} \, (d/\eps)^d$.
In particular for $d=3$, a TSP tour whose length is at most
$2.31$ times the optimal can be computed in $O(n)$ time.
\end{theorem}

We continue with lines in $\RR^d$, a problem much harder to deal with.
Note that an instance with parallel lines reduces to an
instance of ETSP for points in one dimension lower (namely
the points of intersection between the given lines orthogonal to a hyperplane).
Here we obtain the first approximations.

\begin{theorem}\label{thm:lines}
  Given a set of $n$ lines in $\RR^d$, a TSP tour  whose length is at most
  $O(\log^3 n)$ times the optimal can be computed in time $O(d \cdot {\rm poly}(n))$.
\end{theorem}

While for \emph{disjoint} unit balls in $\RR^d$, $d\geq 2$,
the existence of a PTAS has been established~\cite{BFG+09,CJ16,DM03,Sp14},
no PTAS is known for intersecting unit balls in any dimension $d\geq 2$.
For arbitrary unit balls in $\RR^d$, we give constant-factor approximations
by using a black box that computes a good tour of at most $n$ points
(the centers of a suitable subset of disks, resp., balls).
For unit disks in $\RR^2$, we obtain an improved approximation factor $6.75$;
the previous best ratio, $7.62$, holds for \emph{translates}
of a convex region~\cite{DM03}. Let $T(n,d,\eps)$ denote the running
time for computing a $(1+\eps)$-approximation of an optimal tour of
$n$ points in $\RR^d$; recall that $T(n,d,\eps)$ is currently
exponential in $1/\eps$~\cite{RS98}.

\begin{theorem}\label{thm:3}
Given a set of $n$ unit disks in the plane, and $\eps>0$, a TSP tour whose length
is at most $\left(\frac{7}{3} + \frac{8\sqrt3}{\pi}\right) (1+ \eps)$
times the optimal, apart from an additive constant,
can be computed in time $O(T(n, 2, 1.8 \, \eps))$.
In particular, a TSP tour whose length is at most $6.75$ times the
optimal can be computed in time $O(T(n,2,0.0018))$.
Alternatively, a TSP tour  whose length is at most $8.52$ times the
optimal can be computed in time $O(n^{3/2} \log^5{n})$.
\end{theorem}

For congruent balls in $\RR^3$ we give the first explicit
constant approximation factor, \emph{not} in the $O(1)$ form.

\begin{theorem}\label{thm:4}
Given a set of $n$ unit balls in $\RR^3$, and $\eps>0$, a TSP tour  whose length
is at most $54 \sqrt3 (1+\eps)$ times the optimal,
apart from an additive constant, can be computed in time $O(T(n,3,\eps))$.
In particular, a TSP tour whose length is at most $100.61$ times the optimal
can be computed in time $O(T(n,3,0.01))$.
Alternatively, a TSP tour  whose length is at most $104.1$ times the optimal
can be computed in time $O(n^{3})$.
\end{theorem}

The above result generalizes to congruent balls in $\RR^d$ for any fixed
dimension $d$; the proof is analogous to that of Theorem~\ref{thm:4} for
the $3$-dimensional version.

\begin{theorem}\label{thm:5}
Given a set of $n$ unit balls in $\RR^d$, and $\eps>0$, a TSP tour whose length
is at most $O(7.73^d)$ times the optimal can be computed in time $O(T(n,d,\eps))$.
\end{theorem}

\paragraph{Preliminaries.}
Let $\R$ be a set of regions in $\RR^d$, $d \geq 2$.
A set $\Xi \subset \RR^d$ \emph{intersects} $\R$ if
$\Xi$ intersects each region in $\R$, that is, $\Xi \cap r \neq
\emptyset$, $\forall r \in \R$. A shortest TSP tour for a set $\R$
of regions (neighborhoods), denoted by $\opt(\R)$, is a shortest
closed curve in the ambient space that intersects~$\R$.

The Euclidean length of a curve $\gamma$ is denoted by $\len(\gamma)$,
or just $|\gamma|$ when there is no danger of confusion.
Similarly, the total (Euclidean) length of the edges
of a geometric graph $G$ or a polygon $P$ is denoted by $\len(G)$ and
$\per(P)$, respectively. For a hyperrectangle (rectangular box) $Q$ in
$\RR^d$ with sides $w_1,\ldots,w_d$, the total edge length
$\per(Q)=2^{d-1}\sum_{i=1}^d w_i$ is called its \emph{perimeter}.

For $\alpha \geq 1$, we say that an approximation algorithm (for TSPN)
has ratio $\alpha$ if its output tour $\alg$ satisfies $\len(\alg)
\leq \alpha \, \len(\opt)$, where $\opt$ is an optimal tour,
and has \emph{asymptotic} ratio $\alpha$ if its output satisfies
$\len(\alg) \leq \alpha \, \len(\opt) + \beta$ for some constant $\beta \geq 0$.

The convex hull of a set $A\subset \RR^d$ is denoted
by $\conv(A)$. The Cartesian coordinates of a point
$p \in \RR^d$ are denoted by $x_1(p),\ldots , x_d(p)$.
For a line segment $s \in \RR^3$, $\Delta_1(s),\ldots ,\Delta_d(s)$
denote the lengths of its projections on the $d$ coordinate axes.

\section{The illusions and  pitfalls of localization} \label{sec:localization}

Given a set $\R$ of $n$ regions, it would be helpful to find a convex
set that contains an optimal tour $\opt=\opt(\R)$ and whose diameter
is a polynomial in $n$ and perhaps other parameters, such as an upper
bound on $\diam(\opt)$.
A convex set $C_1$ that intersects $\R$ is often easy to compute. It
is tempting to believe (as it has been suggested by several researchers)
that if $C_1$ is scaled up by some suitable polynomial factor, the
resulting convex set $C_2$ might contain $\opt$. Finding such a set
$C_2$ would allow using standard approximation techniques (such as
discretization, convex approximation tools, etc.).

In this section, we show that this na\"{\i}ve approach is infeasible
when the regions in $\R$ are lines or hyperplanes in $\RR^d$.
Let $\lambda(x,y)$ be a given polynomial of $2$ variables with
positive coefficients.
We present constructions for a set of $n$ lines and a set of $n$ hyperplanes,
respectively, such that the minimum intersecting ball $B_1$ is
centered at the origin, but $\lambda \, B_1$ fails to contain $\opt$,
where $\lambda=\lambda(n,\diam(B_1))$. Moreover:
(i) the shortest TSP tour contained in $\lambda \, B_1$ is a
$\Theta(\sqrt{n})$-approximation for lines, in contrast with the
$O(\log^3n)$-approximation in Theorem~\ref{thm:lines}, and
(ii) the shortest TSP tour contained in $\lambda \, B_1$ is a
$c$-approximation for hyperplanes, where $c>1$ is a constant, which
rules out a $(1+\eps)$-approximation algorithm using this approach.
\begin{figure}[htb]
\centerline{\epsfxsize=5in \epsffile{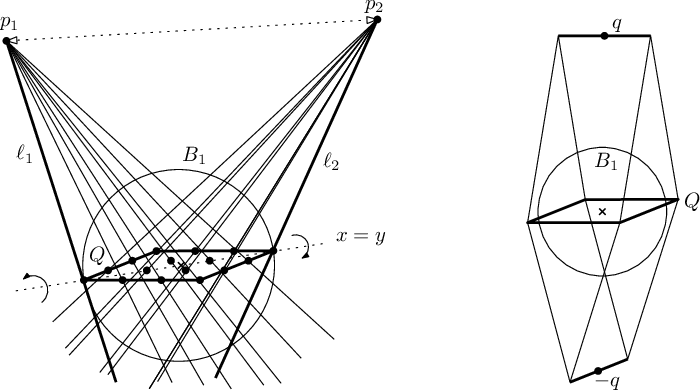}}
\caption{Left: a set $\L$ of nearly vertical lines that intersect a square $Q$
in a grid-like pattern, and their minimum intersecting ball $B_1$.
Right: a set of four nearly vertical planes containing
four sides of a square $Q=[-1,1]^2$ in the $xy$-plane,
and their minimum intersecting ball $B_1$.}
\label{fig:pitfalls}
\end{figure}

\paragraph{Lines in $\RR^3$.}
For an integer $n$ and a polynomial $\lambda(x,y)$, we construct a set
$\L$ of $n$ lines in $\RR^3$. Consider the square $Q=[-1,1]^2$ in the
$xy$-plane (Fig.~\ref{fig:pitfalls}~(left)). Let $B_1$ be the
ball of radius $\sqrt{2}$ centered at the origin, and note that
$Q \subset B_1$. We first construct two skew lines in $\RR^3$ whose
minimum intersecting ball is $B_1$. Start with two vertical lines
passing through $(1,1,0)$ and $(-1,-1,0)$, and observe that they
intersect any horizontal plane at two points at distance $2\sqrt{2}$
apart. Rotate these lines about the horizontal line $\ell_0: y=x$ by
some small angle $\alpha$ and $-\alpha$, respectively, to obtain two skew
lines $\ell_1$ and $\ell_2$. As $\ell_1$ and $\ell_2$ remain orthogonal
to $\ell_0$, the minimum intersecting ball of $\ell_1$ and $\ell_2$ is
still $B_1$. Choose $\alpha$ such that $\ell_1$ and $\ell_2$ intersect
the horizontal plane $z=n \, \lambda(n,4)$ at two points, $p_1$ and $p_2$,
at distance $4$ apart. We now define the set $\L$ of $n$ lines as follows:
$\L$ contains $\ell_1$ and $\ell_2$,
about half of the lines in $\L$ pass through $p_1$ and the
other half pass through $p_2$. The lines in $\L$ are nearly vertical
and intersect $Q$ in a square grid pattern, where any two intersection
points are at distance at least $2/\sqrt{n}$ apart.

\begin{lemma}\label{lem:pitfalls1}
Every TSP tour $\gamma$ lying in $\lambda\, B_1$ satisfies
$\len(\gamma) \geq \frac{\sqrt{n}}{8}\, \len(\opt)$.
In particular, $\lambda\,  B_1$ does not contain the optimal tour
$\opt=\opt(\L)$ or any $o(\sqrt{n})$-approximation of it.
\end{lemma}
\begin{proof}
Note that the tour that visits points $p_1$ and $p_2$, of length
$2|p_1p_2|=8$, intersects all lines. Consequently, $\len(\opt) \leq 8$
and $\diam(\opt) \leq 4$.
Consider a tour $\gamma$ lying in $\lambda\, B_1$
and let $\gamma'$ be the orthogonal projection of $\gamma$ onto the
$xy$-plane, where $\len(\gamma')\leq \len(\gamma)$.
Since the lines in $\L$ are nearly vertical, the orthogonal
projections of the line segments in $\{\ell\cap \lambda\, B_1: \ell\in
\L\}$ have length at most $2/n$, and they each contain distinct grid
points within $Q$. Since the distance between any two grid points
is at least $2/\sqrt{n}$, we have $\len(\gamma')\geq
n(2/\sqrt{n}-4/n)= 2\sqrt{n}-4\geq \sqrt{n}$, and so
$\len(\gamma)\geq \sqrt{n} \geq \frac{\sqrt{n}}{8}\, \len(\opt)$,
as required.
\end{proof}

\paragraph{Planes in $\RR^3$.}
For an integer $n$ and a polynomial $\lambda(x,y)$, we construct a set
$\H$ of $n$ planes in $\RR^3$.
Consider the unit square $Q=[-1,1]^2$ in the $xy$-plane
(Fig.~\ref{fig:pitfalls}~(right)). Let the first 4 planes in $\H$ each
contain one side of $Q$. The two planes containing the two sides of $Q$
parallel to the $x$-axis intersect in a line parallel to the $x$-axis
and containing the point $q=(0,0,h)$, where $h$ is large, specifically
$h=n \, \lambda(n,3)$. The two planes containing the sides of $Q$
parallel to the $y$-axis intersect in a line parallel to the $y$-axis
and containing the point $-q=(0,0,-h)$.
By symmetry, the minimum intersecting ball of these four planes is
centered at the origin, and its radius is at least $1-1/h$ and at most
$1$. Arrange the remaining $n-4$ planes in $\H$ such that
they all contain the point $q=(0,0,h)$, are tangent to the ball $B_1$,
and the tangency points are uniformly distributed along a horizontal
circle $C \subset \partial B_1$. By construction, $B_1$ is the minimum
intersecting ball of the $n$ planes in $\H$.

\begin{lemma}\label{lem:pitfalls2}
Every TSP tour $\gamma$ lying in $\lambda\, B_1$ satisfies
$\len(\gamma) \geq \frac{\pi}{2}(1-O(1/n))\, \len(\opt)$.
In particular, $\lambda\, B_1$ does not contain the optimal tour
$\opt=\opt(\H)$ or any $(1+\eps)$-approximation of it for a
sufficiently small $\eps>0$.
\end{lemma}
\begin{proof}
Note that the triangle formed by the point $q$ and its orthogonal
projections onto the two planes containing the two sides of $Q$ parallel
to the $y$-axis is a tour for $\H$. The length of this tour is at most
$4+4/h$. Consequently, $\len(\opt)\leq 4+4/h$ and $\diam(\opt) \leq 3$.
Consider a tour $\gamma$ lying in $\lambda\, B_1$,
and let $\gamma'$ be the orthogonal projection
of $\gamma$ to the $xy$-plane, where $\len(\gamma')\leq \len(\gamma)$.
Since the planes in $\H$ are nearly vertical, the orthogonal
projections of the disks in $\{H \cap \lambda\, B_1: H\in \H\}$ are
ellipses of width at most $2/n$. The first four ellipses each contain
a side of the square $Q$. The remaining ellipses form $\lfloor (n-4)/2\rfloor$
pairs such that the major axes of any pair are on parallel lines at
distance at least $2-2/n$ apart, and the directions of the pairs are
uniformly distributed.
Consequently, the width of $\gamma'$ is at least $2-O(1/n)$, and so
$\len(\gamma')\geq 2\pi(1-O(1/n))$ $\geq \frac{\pi}{2}(1-O(1/n))\, \len(\opt)$,
as required.
\end{proof}

\paragraph{Easy weak approximations.}
Finding a minimum-radius ball $B_1$ that intersects a set of $n$ hyperplanes
(resp., lines) in $\RR^d$ is an LP-type problem~\cite{DNW04};
for a fixed $d$, such a ball can be computed in $O(n)$ time.
This immediately leads to a simple $2^{d-1}$-approximation for
hyperplanes and a $O(n^{1-1/(d-2)})$-approximation for lines in
$\RR^d$. Indeed, since the minimum enclosing ball $B_\opt$ of an optimal
tour $\opt$ also intersects all $n$ hyperplanes (resp., lines), it
is clear that $\diam(B_1) \leq \diam(B_\opt)$.
Since $B_\opt$ is spanned by up to $d+1$ points, it is easy to see that
$\len(\opt) \geq 2 \, \diam(B_\opt)$. On the other hand,
a Hamiltonian cycle of the $2^d$ vertices of an enclosing hypercube of
$B_1$ intersects all hyperplanes (cf.~Observation~\ref{O1}), and has
length at most $2^d \diam(B_1)$.
For $n$ lines in $\RR^d$, one can compute
all intersection points of the $n$ lines with the boundary of $B_1$,
and return an approximate tour for these $2n$ points of length
$\diam(B_1) \cdot O(n^{1-1/(d-2)})$ by a result of Few~\cite{Fe55}.

In Section~\ref{sec:hyperplanes} we obtain a better approximation
for TSPN for $n$ hyperplanes, a ratio close to $2^{d-1}/\sqrt{d}$, by
using hyperrectangles instead of balls and a careful analysis.
In Section~\ref{sec:lines}, we use a completely different approach
to achieve a much better $O(\log^3 n)$-approximation for TSPN
for $n$ lines in $\RR^3$.

\section{TSPN for hyperplanes in $\RR^d$}  \label{sec:hyperplanes}

In this section we prove Theorem~\ref{thm:planes}:
we present a constant factor approximation algorithm
for TSPN for a set $\H$ of $n$ hyperplanes in $\RR^d$
with ratio $(1+\eps) \frac{2^{d-1}}{\sqrt{d}}$ and running in $O(n)$ time,
for constant $d$ and $\eps>0$. In particular, for $\eps =0.0002$,
we get the approximation ratios $2.31$ in $\RR^3$, $4.001$ in $\RR^4$,
and $7.16$ in $\RR^5$.

Our algorithm is based on solving low-dimensional linear programs;
it combines ideas from~\cite{Du12,DJ10,DM03,J02}. We show below
(Lemma~\ref{lem:curve}) that any closed curve $\gamma\subset \RR^d$
is contained in a rectangular box of edge lengths $w_1,\ldots , w_d$
such that $\sum_{i=1}^d w_i \leq \frac{\sqrt{d}}{2} \, \len(\gamma)$.
We apply this result to the optimal tour $\opt(\H)$.
Then we use linear programming to compute a $(1+\eps)$-approximation
for the minimum-perimeter rectangular box intersecting $\H$, and
produce a Hamiltonian cycle of the $2^d$ vertices as an approximate tour.

Let $Q$ be rectangular box in $\RR^d$ such that the $d$ extents of $Q$ are
$w_1\leq w_2\leq \ldots \leq w_d$. It is not difficult to see (by induction on $d$)
that $Q$ admits a Hamiltonian cycle of total length
$$\tau(Q) = 2^{d-1}w_1+2^{d-2}w_2+\ldots +2w_{d-1}+2w_d=w_d +\sum_{j=1}^d 2^{d-j}w_j .$$

The \emph{orientation} of a rectangular box $Q$ in $\RR^d$ is given by
an orthonormal basis whose vectors are parallel to the edges of $Q$.
Cover the unit sphere $\mathbb{S}^{d-1}\subset \RR^d$ with spherical caps of radius
$r=\eps/(d-1)$. Since the (spherical) volume of $\mathbb{S}^{d-1}$ is constant,
and the volume of a spherical cap of radius $r$ is $\Theta(r^{d-1})=\Theta((d/\eps)^{d-1})$,
we can select a set $A=\{\alpha_1,\ldots , \alpha_m\}$ of $m=O(d^d \eps^{1-d})$ orientations
that cover all possible orientations within an error of $\eps/(d-1)$. That is, for
any orientation $\alpha$, there is an orientation $\alpha'\in A$ and a matching
between the orthogonal bases $\alpha$ and $\alpha'$ so that the angle between
any two corresponding vectors is at most $\eps/(d-1)$.

\medskip
\noindent{\bf Algorithm~A1.}
\begin{itemize}
\item[] {\sc Step 1:} Let $m = O(d^d \eps^{1-d})$.
  For each $i=1,\ldots,m$,
  compute a minimum-perimeter rectangular box $Q_i$ with orientation
  $\alpha_i$ that intersects $\H$.

\item[] {\sc Step 2:} Let $Q$ be a box with the minimum perimeter over
  all $m$ directions, found above. Return a Hamiltonian cycle of the $2^d$
  vertices of $Q$, of length $\tau(Q)$, as depicted in Fig.~\ref{f4}~(right).
\end{itemize}

\begin{figure}[htb]
\centerline{\epsfxsize=5.5in \epsffile{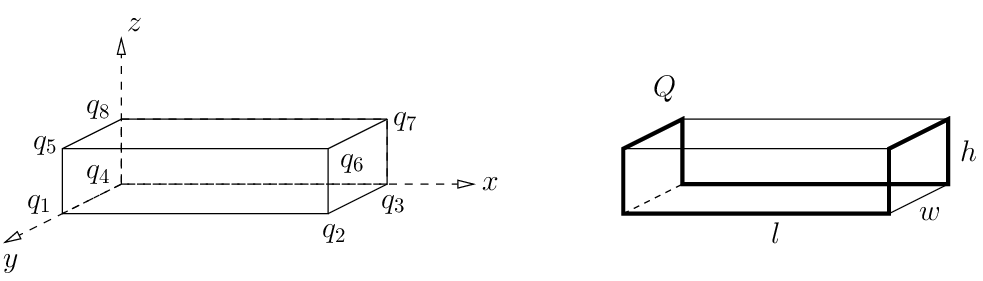}}
\caption{Left: An axis-aligned rectangular box $Q$.
Right: a Hamiltonian cycle (in bold lines) of length $2l +2w +4h$
of the vertices of $Q$ that visits all planes intersecting $Q$.}
\label{f4}
\end{figure}

For each iteration $i=1,\ldots,m$, we compute the box $Q_i$
by linear programming. By a suitable rotation of the set $\H$ of hyperplanes,
the box $Q_i$ is axis-aligned. This can be obtained in $O(n)$ time per iteration.
For a hyperplane $\sigma$, let $\vec{u}(\sigma)$ denote the unit vector
orthogonal to $\sigma$ with a positive $x_d$-coordinate.
An axis-aligned rectangular box in $\RR^d$ has $2^{d-1}$ antipodal
pairs of vertices, which we denote by $s_j$ and $t_j$, for
$j=1,\ldots, 2^{d-1}$, such that the vector $s_jt_j$ has a positive
$x_d$-coordinate. Partition $\H$ into $2^{d-1}$ \emph{types} based on
the following rule (ties are broken arbitrarily):

\begin{itemize} \itemsep 0pt
\item $\sigma\in \H$ is of type $j$, $j\in \{1,\ldots, 2^{d-1}\}$,
      if the $\vec{u}(\sigma)$-minimal and $\vec{u}(\sigma)$-maximal
      vertices of $Q_i$ are $s_j$ and $t_j$, respectively.
\end{itemize}

Let $\H=\bigcup_{j=1}^{2^{d-1}}\H_i$ be the corresponding partition of
the hyperplanes given by this rule. For a hyperplane $\sigma$, that is
not parallel to any coordinate axis, denote by $\sigma(p) \leq 0$
(respectively, by $\sigma(p)\geq 0$) that a point $p\in \RR^d$ lies in the
closed halfspace bounded from above by $\sigma$ (resp., bounded from
below by $\sigma$). Observe that for $j=1,\ldots 2^{d-1}$,
\begin{itemize} \itemsep 0pt
\item a hyperplane $\sigma \in \H_j$ intersects the rectangular box
  $Q_i$ if and only if $\sigma(s_j)\leq 0\leq \sigma(t_j)$.
\end{itemize}

The minimum-perimeter objective is naturally expressed as a linear function.
The resulting linear program has $2d$ variables $x_1,y_1,\ldots, x_d,y_d$
for the box $Q_i =[x_1,y_1] \times \ldots \times [x_d,y_d]$,
and $2n+d$ constraints.

\begin{align*} \label{LP1}
\textup{minimize} \quad  &\sum_{k=1}^d (y_k-x_k)
\ \ \ \quad \quad \textup{(LP1)} \\
\textup{subject to} \quad & \left\{
\begin{array}{lll}
\sigma(s_j)\leq 0 & \mbox{ \rm if }\sigma\in \H_j, &\forall \sigma\in \H\\
\sigma(t_j)\geq 0 & \mbox{ \rm if }\sigma\in \H_j, &\forall \sigma\in \H\\
x_k \leq y_k &&\forall k\in \{1,\ldots, d\}
\end{array}
\right.
\end{align*}

\paragraph{Algorithm analysis.} The key observation is the following.
\begin{observation}\label{O1}
\begin{itemize}\itemsep 0pt
\item[]
\item[{\rm (i)}] If a polygon $\gamma$ intersects $\H$, then
  $\conv(\gamma)$, and any other set containing $\conv(\gamma)$, also intersects~$\H$.
\item[{\rm (ii)}] If a convex polytope $Q$ intersects $\H$, then every
  Hamiltonian cycle of the vertices of $Q$ also intersects $\H$.
\end{itemize}
\end{observation}

Let $Q^*$ be a minimum-perimeter rectangular box intersecting $\H$,
with side lengths denoted by $w_1,\ldots , w_d$.
To account for the error made by discretization, we need the following
easy fact. The planar variant was shown in~\cite[Lemma 2]{DJ10}.
We include the almost identical proof for completeness.

\begin{lemma} \label{lem:box}
There exists $i \in \{1,\ldots,m\}$ such that
$\per(Q_i) \leq (1+\eps)\,\per(Q^*)$.
\end{lemma}
\begin{proof}
Consider a box $Q_i$, $i \in \{1,\ldots, m\}$, that minimizes
the angle difference $\beta$ between the orientations of $Q_i$ and $Q^*$.
By construction, there exists $i \in \{1,\ldots , m\}$ such that
the angle $\beta$ between the orientations of $Q_i$ and $Q^*$
is at most $\eps/(d-1)$, that is, $\beta \leq \eps/(d-1)$.

Let $Q_i'$ be the minimum-perimeter box with the same orientation
as $Q_i$ such that $Q_i'$ contains $Q^*$. By definition,
$\per(Q_i) \le \per(Q'_i)$. An easy trigonometric calculation shows
that the corresponding sides $w_1',\ldots ,w_d'$ of $Q_i'$ are
bounded from above as follows. For $j=1,\ldots , d$, we have
$$ w_j' \leq w_j \cos \beta + \left( \sum_{k\neq j}w_k \right) \sin \beta \leq
w_j + \left( \sum_{k\neq j} w_k \right) \, \frac{\eps}{d-1}.$$
Consequently,
$$ \sum_{j=1}^d w_j' \leq (1+\eps) \sum_{j=1}^d w_j,$$
that is,
$$ \per(Q'_i) \le (1 + \eps)\, \per(Q^*).
$$
Since $\per(Q_i) \le \per(Q'_i)$, it follows that
$\per(Q_i) \le (1 + \eps)\, \per(Q^*)$,
as required.
\end{proof}

\begin{lemma} \label{lem:curve}
A closed curve $\gamma\subset \RR^d$ is contained in a
rectangular box $Q$ with side lengths $w_1,\ldots , w_d$ satisfying
$\sum_{j=1}^d w_j \leq \frac{\sqrt{d}}{2} \, \len(\gamma)$.
Consequently, $\per(Q) \leq \sqrt{d}\cdot 2^{d-2} \, \len(\gamma)$.
\end{lemma}
\begin{proof}
Let $\gamma$ be a closed curve and let $Q=Q(\gamma)$ be a
minimum-perimeter enclosing rectangular box.
Assume for convenience that $Q$ is axis-aligned, so that
its extents in the $d$ coordinates are $w_1,\ldots,w_d$, respectively.
Since $Q$ has minimum perimeter, $\gamma$ meets each $(d-1)$-dimensional
face of $Q$. Arbitrarily select a point $a_i$ of $\gamma$ on each of the
$2d$ faces of $Q$, in the order traversed by $\gamma$, to obtain a polygonal
closed curve $\gamma_1 = (a_1,\ldots,a_{2d})$ still enclosed in $Q$
(duplicate points are possible). For convenience, introduce $a_{2d+1}=a_1$.

By the triangle inequality,
\begin{equation}\label{eq:tri+}
\len(\gamma)\geq \len(\gamma_1) =
\sum_{i=1}^{2d} \len(a_ia_{i+1}).
\end{equation}
By the Cauchy-Schwarz inequality, for $i=1,\ldots ,2d$, we have
\begin{equation}\label{eq:CS+}
\len(a_ia_{i+1})
=\left(\sum_{j=1}^d \Delta_j^2(a_ia_{i+1})\right)^{1/2}
\geq \frac{1}{\sqrt{d}} \sum_{j=1}^d \Delta_j(a_ia_{i+1}).
\end{equation}

Since $\gamma_1$ is a closed curve that visits both faces of $Q$
orthogonal to the $j$th axis for each $j=1,\ldots,d$, we have
$$ \sum_{i=1}^{2d} \Delta_j(a_ia_{i+1}) \geq 2w_j, \text{  for } j=1,\ldots,d. $$
Combined with \eqref{eq:tri+} and \eqref{eq:CS+}, this yields
$ \len(\gamma) \geq \frac{2}{\sqrt{d}} \sum_{j=1}^d w_j$, as claimed.
\end{proof}

Let $L^*=\len(\opt)$ and let $Q_\opt$ be a minimum-perimeter
rectangular box containing $\opt$. By Observation~\ref{O1} and
Lemmas~\ref{lem:box} and~\ref{lem:curve}, we have
\begin{equation}\label{eq:O1L12}
\per(Q_i)\leq (1+\eps)\per(Q^*)\leq (1+\eps)\per(Q_\opt)\leq
(1+\eps)\sqrt{d}\cdot 2^{d-2} L^*.
\end{equation}
By Observation~\ref{O1}, any Hamiltonian cycle of $Q_i$ is a valid tour
of the hyperplanes in $\H$, and its length is bounded above by
$\per(Q_i)$. From \eqref{eq:O1L12}, this length is at most
$(1+\eps)\sqrt{d}\cdot 2^{d-2}$ times the optimum.

We now refine the analysis and show that the length $\tau(Q_i)$ of a \emph{shortest}
Hamiltonian cycle of $Q_i$ is at most $2^{d-1}/\sqrt{d}$ times the optimum.
Algorithm A1 computes a tour $T$ of length $L=\tau(Q_i)=w_d+\sum_{j=1}^d 2^{d-j} w_j$,
where $\sum_{j=1}^d w_j \leq (1+\eps) \frac{\sqrt{d}}{2} L^*$.
For $i=1,\ldots,d$ put $S_i=\sum_{j=1}^i w_j$ and $S=S_d$.
Since $w_1\leq w_2\ldots \leq w_d$, we have $S_i \leq i S/d$, for $i=1,\ldots,d$.
Consequently,
\begin{align} \label{E37}
L &= w_d+\sum_{j=1}^d 2^{d-j} w_j = 2^{d-1} w_1 +
2^{d-2} w_2 + \ldots + 2 w_{d-1} +2w_d \nonumber \\
&= 2S_d + \sum_{i=1}^{d-2} 2^i S_{d-i-1}
\leq \frac{S}{d} \left( 2d + \sum_{i=1}^{d-2} 2^i (d-i-1) \right) \nonumber \\
&=\frac{S}{d} \left( \left(2d + d\sum_{i=1}^{d-2} 2^i \right)
-\sum_{i=1}^{d-2} (i+1) 2^i \right) \nonumber \\
&=\frac{S}{d} \left( d \, 2^{d-1} - (d-2) \, 2^{d-1} \right) = \frac{2^d}{d} \, S.
\end{align}
To evaluate $\sum_{i=1}^{d-2} (i+1) 2^i$ in the last line
of~\eqref{E37}, we set $F(x) = \sum_{i=2}^{d-1} x^i$, and evaluate
its derivative $F'(x)$ in two ways (we omit the details).
Substituting now the upper bound $S \leq (1+\eps) \frac{\sqrt{d}}{2} L^*$
yields
$$ L \leq \frac{2^d}{d} \, S \leq (1+\eps) \frac{\sqrt{d}}{2} \, \frac{2^d}{d} L^*
= (1+\eps) \frac{2^{d-1}}{\sqrt{d}} L^*, $$
as required.

A rough upper estimate on the running time accounts for $m=O(d^d \eps^{1-d})$
$2d$-dimensional linear programs,
each solved in $O(d^2 2^{2d} \, n)$ time~\cite{CM96,MSW96}. The overall
running time is $O(C_{d,\eps}\ n)$, where $C_{d,\eps} = d^2 2^{2d} \, (d/\eps)^d$.

In particular, for $d=3$ and $\eps \leq 0.00022$, we have  $L\leq 2.31 L^*$,
thus algorithm A1 computes a tour  whose length is at most $2.31$ times the optimal.
The algorithm solves a (large!) constant number of $6$-dimensional
linear programs, each in $O(n)$ time~\cite{Me84}. The overall time is $O(n)$.
A modest number of linear programs suffices to get a weaker approximation,
say $2.5$ or $3$.

\paragraph{Remark.} A standard reduction from the \emph{sorting problem}
or from the \emph{convex hull problem} as in~\cite{PS85},
applied to a suitable set of hyperplanes, shows that a shortest TSP
tour for $n$ hyperplanes in $\RR^d$, $d\geq 2$, cannot be computed in $O(n)$ time;
that is, in the worst-case, finding an optimal tour requires $\Omega(n \log{n})$
time in the algebraic decision tree model of computation.

\section{TSPN for lines in $\RR^d$}  \label{sec:lines}

In this section we prove Theorem~\ref{thm:lines}. Let
$\L=\{\ell_1,\ldots , \ell_n\}$ be a set of $n$ lines in $\RR^d$, $d\geq 3$.
If all lines are parallel, we reduce TSPN for $\L$ to TSP for the $n$ intersection
points of the lines with an arbitrary orthogonal hyperplane. Otherwise,
we reduce the TSPN problem to a group Steiner tree problem on a geometric graph.
Specifically, we construct a geometric graph $G_{\L}=(V,E)$, where $V$ is a set of
points on the lines in $\L$, and $E$ consists of line segments connecting some of
these points; the weight of an edge is its Euclidean length.
We have $V = \bigcup_{i=1}^n V_i$, where $V_i \subset \ell_i$
($i=1,\ldots,n$) naturally form $n$ groups, one for each line.
We then run an approximation algorithm for the group Steiner tree
problem on this graph.

It is well known that an optimal TSP tour for points can be 2-approximated
by a minimum spanning tree (a TSP tour is obtained by doubling the edges of
the MST and by using shortcuts and the triangle inequality).
Reich and Widmayer~\cite{RW90} introduced the following
\emph{group Steiner tree} (a.k.a., \emph{one-of-a-set Steiner tree}) problem.
Given an edge weighted graph $G=(V,E)$ and $g$ \emph{groups}
of vertices $V_1,\ldots,V_g \subseteq V$, $|V|=n$, find a tree of
minimum weight in $G$ that includes at least one vertex from each group.
The problem is known to be APX-hard~\cite{BP89}, and it cannot be
approximated better than $\Omega (\log^{2-\eps} n)$ for any $\eps>0$
unless NP admits quasipolynomial-time Las Vegas~algorithms~\cite{HK03}.
The current best approximation ratio, $O(\log^2 n \log g)$ comes from
the algorithm of Garg~\etal~\cite{GKR00} as further refined by
Fakcharoenphol~\etal~\cite{FKR04}. As before with the MST, by doubling
the edges of such a tree, and by using shortcuts and the triangle inequality, one
can obtain a Hamiltonian cycle which includes at least one vertex from
each group, and the approximation ratio of this cycle is of the
same asymptotic order as the approximation ratio of the group Steiner tree used.

The key Lemma~\ref{lem:line} below shows that the length of a minimum group Steiner tree
in $G_{\L}$ (the graph used by the algorithm) is a constant-factor approximation
for the minimum TSP tour for $\L$.
In our case, the graph $G_{\L}$ has $O(n^3)$ vertices and the number of
groups is $n$, so the $O(\log^2 n \log g)$-approximation~\cite{FKR04,GKR00}
for the group Steiner tree problem on a graph with $n$ vertices and $g$ groups
yields an $O(\log^3n)$-approximation for TSPN for $n$ lines in $\RR^d$.

\paragraph{Construction of graph $G_{\L}$.}
A \emph{transversal} between two lines, $\ell_i$ and $\ell_j$, is a
line segment $t_{i,j}t_{j,i}$ with $t_{i,j}\in \ell_i$ and $t_{j,i}\in \ell_j$.
A \emph{minimum transversal} of two lines is one of minimum length; it is orthogonal
to both lines, and if the two lines intersect, it is a segment of zero length
(\ie, $t_{i,j}=t_{j,i}$). A pair of skew lines admits a unique minimal transversal.

We define $G_{\L}$ in terms of a set $S$ of transversal segments
among the lines: let the vertices of $G_{\L}$ be the set of
endpoints of the segments in $S$; the edges of $G_{\L}$ include all
segments in $S$, and all segments along the lines in $\L$ between
consecutive vertices. We use two types of transversal segments, $S_1$ and $S_2$,
with $S=S_1\cup S_2$. Let $S_1$ be the set of minimum transversals
between all pairs of nonparallel lines in $\L$. To define
$S_2$, we proceed as follows; see Fig.~\ref{fig:lines-block}~(left).
For each ordered pair of nonparallel lines $(\ell_i,\ell_j)\in \L^2$,
let $T_{i,j}$ be the hyperplane orthogonal to $\ell_i$ and containing
$t_{i,j}$, and let $P_{i,j}=\{T_{i,j}\cap \ell: \ell\in \L\}$ be the
set of intersection points of $T_{i,j}$ with the lines in $\L$.
Note that $t_{i,j}\in P_{i,j}$ and $|P_{i,j}|\leq n$.
Add all edges of the complete graph on $P_{i,j}$ to the set $S_2$.
\begin{figure}[htb]
\centerline{\epsfxsize=.99\textwidth \epsffile{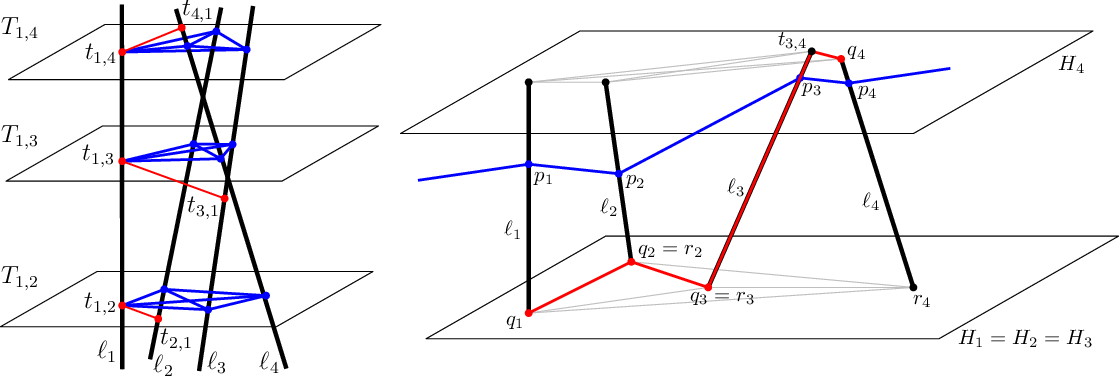}}
\caption{Left: A set of four lines $\ell_1,\ldots ,\ell_4$.
The minimum transversals between $\ell_1$ and the
other three lines are $t_{1,2}t_{2,1}$, $t_{1,3}t_{3,1}$ and $t_{1,4}t_{4,1}$.
For $i=1,2,3$, we insert a complete graph in the
plane $T_{1,i}$ orthogonal to $\ell_1$ and incident to $t_{1,i}$.
Right:  An optimal tour $\opt$ visits the lines $\ell_1,\ell_2,\ell_3,\ell_4$
 at points $p_1,p_2,p_3,p_4$, respectively. We construct a path
 $\gamma_1=(q_1 q_2 q_3 q_4)$ that visits these lines in the same
 order. Points $q_1,q_2,q_3$ are in the same hyperplane
 $H_1=H_2=H_3$. Point $q_4$ is in a hyperplane $H_4\neq H_3$
 because $|q_3r_4| >3 |p_3p_4|$.}
\label{fig:lines-block}
\end{figure}

The set of transversals $S=S_1\cup S_2$ determines $G_{\L}$.
The segments in $S_1$ have at most $n(n-1)$ endpoints, and for
each segment endpoint we compute a complete graph, each with at most $n$ vertices
and ${n\choose 2}$ edges, thus $S_2$ contains $O(n^4)$ segments
and $|S|= |S_1 \cup S_2|= O(n^2 + n^4) = O(n^4)$.
Consequently, $G_{\L}$ has $O(n^3)$ vertices and $|S|+ O(n^3) = O(n^4)$ edges.
The vertices in $G_{\L}$ are partitioned into $n$ groups, one for each
line $\ell \in \L$. The group corresponding to line $\ell$ contains
all $O(n^2)$ endpoints of transversal segments in $S$ on $\ell$.

The minimum transversal of two skew lines in $\RR^d$ lies in the 3-dimensional
affine subspace spanned by the lines, and it can be computed in $O(d)$ time;
point-hyperplane intersections can also be computed in $O(d)$ time in $\RR^d$.
Consequently, the graph $G_{\L}$ can we computed in $O(d n^4)$ time.

\paragraph{Two technical lemmas.} The approximation relies on Lemmas~\ref{lem:mt}
and~\ref{lem:par} (below). According to Lemma~\ref{lem:mt}, if the
directions of two lines are far apart, then a connecting segment can
be approximated by a $3$-segment path that detours through the minimum transversal
of the two lines. According to Lemma~\ref{lem:par}, if two lines are
nearly vertical and we are given some horizontal transversal segment
between the lines, then the only way to find a significantly shorter transversal
is to move the endpoints closer to the endpoints of the minimum transversal.

\begin{figure}[htb]
\centerline{\epsfxsize=6in \epsffile{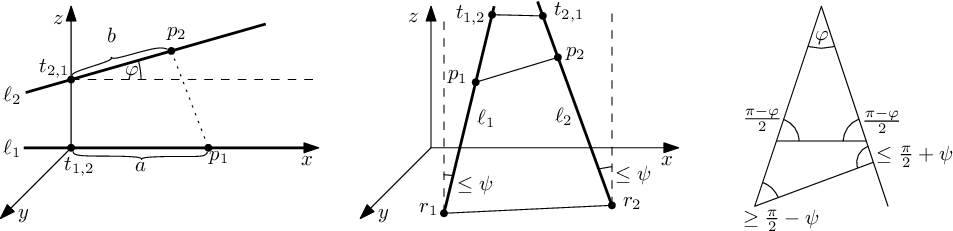}}
\caption{Left: The angle between lines $\ell_1$ and $\ell_2$ is $\varphi$.
The distance between $p_1\in \ell_1$ and $p_2\in \ell_2$ is approximated
by the polygonal path $(p_1,s_1,s_2,p_2)$ that passes through the
minimum transversal $s_1s_2$ between the two lines.
Middle: If $|p_1p_2|\leq \frac{1}{3}|r_1r_2|$, then
$p_1$ and $p_2$ are much closer to the minimum transversal
than $r_1$ and $r_2$, respectively.
Right: Two triangles with angle $\varphi$. The other two angles are equal
in one triangle, and they differ by at most $2\psi$ in the other triangle.}
\label{fig:mt}
\end{figure}

\begin{lemma}\label{lem:mt}
Let $\ell_1$ and $\ell_2$ be two lines in $\RR^d$ such that the angle between their
directions is $\varphi \in (\varphi_0,\frac{\pi}{2}]$.
  Let $t_{1,2}t_{2,1}$ be their minimum transversal with
  $t_{1,2}\in \ell_1$ and $t_{2,1}\in \ell_2$.
Let $p_1\in \ell_1$ and $p_2\in \ell_2$ be two points. Then
$|p_1t_{1,2}|+|t_{1,2}t_{2,1}|+|t_{2,1}p_2|\leq \sqrt{\frac{3}{1-\cos \varphi_0}} |p_1p_2|.$
\end{lemma}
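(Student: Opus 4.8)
The plan is to set up coordinates adapted to the minimum transversal and then bound each of the three pieces $|p_1s_1|$, $|s_1s_2|$, $|s_2p_2|$ in terms of $|p_1p_2|$. Since $s_1s_2$ is the minimum transversal, it is orthogonal to both $\ell_1$ and $\ell_2$; place $s_1s_2$ along an axis (call its length $h=|s_1s_2|$), and let $a=|p_1s_1|$, $b=|s_2p_2|$ be the signed (or unsigned) distances of $p_1,p_2$ from $s_1,s_2$ along their respective lines. The key geometric fact is that the displacement vector $p_1p_2$ decomposes into the transversal part (length $h$, orthogonal to both line directions) plus the in-line parts; because the transversal direction is orthogonal to both $\ell_1$ and $\ell_2$, Pythagoras gives
\[
|p_1p_2|^2 = h^2 + |a\,\vec{e}_1 - b\,\vec{e}_2|^2,
\]
where $\vec{e}_1,\vec{e}_2$ are unit direction vectors with $\vec{e}_1\cdot\vec{e}_2 = \cos\varphi$ (sign chosen appropriately). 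Expanding, $|a\vec{e}_1 - b\vec{e}_2|^2 = a^2 + b^2 - 2ab\cos\varphi$.

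Next I would lower-bound $a^2+b^2-2ab\cos\varphi$ in terms of $a^2+b^2$ (hence in terms of $(|a|+|b|)^2$). Since $\cos\varphi \le \cos\varphi_0 < 1$, we have $a^2+b^2 - 2ab\cos\varphi \ge a^2 + b^2 - (a^2+b^2)\cos\varphi_0 = (1-\cos\varphi_0)(a^2+b^2)$ when $ab\ge 0$; when $ab<0$ the cross term only helps. Combined with $a^2+b^2 \ge \tfrac12(|a|+|b|)^2$, this yields
\[
|p_1p_2|^2 \ge h^2 + \tfrac{1}{2}(1-\cos\varphi_0)(|a|+|b|)^2 \ge \tfrac{1}{3}(1-\cos\varphi_0)\bigl(h + |a| + |b|\bigr)^2,
\]
where the last step uses $(1-\cos\varphi_0)\le 1$ together with the elementary inequality $x^2 + \tfrac12 y^2 \ge \tfrac13(x+y)^2$ (equivalently $(x-y)^2 \ge 0$ after clearing). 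Taking square roots and rearranging gives exactly $|p_1s_1|+|s_1s_2|+|s_2p_2| = h+|a|+|b| \le \sqrt{3/(1-\cos\varphi_0)}\,|p_1p_2|$.

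The main obstacle is getting the orthogonality decomposition clean: one must verify that the transversal direction is genuinely orthogonal to the span of $\vec{e}_1$ and $\vec{e}_2$ (not merely to each separately in a way that fails in higher dimensions) — but this is immediate since $s_1s_2\perp\ell_1$ and $s_1s_2\perp\ell_2$ means $s_1s_2$ is orthogonal to $\operatorname{span}(\vec e_1,\vec e_2)$, so the Pythagorean split into the $s_1s_2$-direction and the plane $\operatorname{span}(\vec e_1,\vec e_2)$ is valid regardless of $d$. A secondary subtlety is the sign bookkeeping on $a,b$: the worst case is $ab\ge 0$ with $\varphi=\varphi_0$, and one should check that replacing $\cos\varphi$ by $\cos\varphi_0$ is indeed the pessimal choice, which it is since the coefficient of $-2ab$ is maximized there. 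The two elementary inequalities ($a^2+b^2\ge\frac12(|a|+|b|)^2$ and $x^2+\frac12y^2\ge\frac13(x+y)^2$) are routine and account for the constant $3$ in the bound.
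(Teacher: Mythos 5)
Your proof is correct and takes essentially the same approach as the paper's: both use the orthogonality of the minimum transversal to $\mathrm{span}(\vec e_1,\vec e_2)$ to write $|p_1p_2|^2=h^2+a^2+b^2-2ab\cos\varphi$ and then extract the constant $\sqrt{3/(1-\cos\varphi_0)}$ by elementary quadratic inequalities, the paper organizing the algebra via the Cauchy--Schwarz bound $(a+h+b)^2\le 3(a^2+b^2+h^2)$ together with $|p_1p_2|^2\ge(1-\cos\varphi_0)(a^2+b^2+h^2)$, and handling the sign of the cross term by two cases just as you do. One cosmetic slip: your inequality $x^2+\tfrac12 y^2\ge\tfrac13(x+y)^2$ is true, but after clearing it reduces to $(2x-y)^2\ge 0$, not $(x-y)^2\ge 0$.
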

\begin{proof}
Consider the 3-dimensional affine subspace spanned by $\ell_1$ and $\ell_2$.
Without loss of generality, we may assume that $\ell_1$ is the $x$-axis,
$p_1=(a,0,0)$, $t_{1,2}=(0,0,0)$, and $t_{2,1}=(0,0,h)$ as in Fig.~\ref{fig:mt}~(left).
Let $a=|p_1 t_{1,2}|$ and $b=|p_2 t_{2,1}|$.
The Cauchy-Schwarz inequality yields the upper bound
$$ |p_1t_{1,2}|+|t_{1,2}t_{2,1}|+|t_{2,1}p_1|= a+h+b \leq  \sqrt{3(a^2+b^2+h^2)}. $$
If $x(p_2) \geq 0$ (as in Fig.~\ref{fig:mt}, left), then the law of cosines yields
\begin{eqnarray}
|p_1p_2|^2 &=& h^2+a^2+b^2-2ab\cos \varphi\nonumber\\
&=& h^2+ (a-b)^2\cos \varphi+ (a^2+b^2)(1-\cos \varphi)\nonumber\\
&\geq & (1-\cos\varphi) (a^2+b^2+h^2)\nonumber\\
&\geq & (1-\cos\varphi_0) (a^2+b^2+h^2).\nonumber
\end{eqnarray}
If $x(p_2) \leq 0$, then
$|p_1p_2|^2 = h^2+a^2+b^2-2ab\cos (\pi-\varphi) \geq h^2+a^2+b^2$,
since $\cos(\pi-\varphi)<0$, and we obtain
$|p_1p_2|^2 \geq (1-\cos\varphi_0) (a^2+b^2+h^2)$
in this case, as well. In both cases, the claimed inequality follows
after taking square roots.
\end{proof}

\begin{lemma}\label{lem:par}
Let $\ell_1$ and $\ell_2$ be two lines in $\RR^d$ such that the angle between their
directions is $\varphi\in (0,\frac{\pi}{6}]$; and the direction of each line
differs from the $x_d$-axis by at most $\psi\in [0,\frac{\pi}{6}]$.
Let $p_1\in \ell_1$ and $p_2\in \ell_2$ be two arbitrary points on the two lines;
let $r_1\in \ell_1$ and $r_2\in \ell_2$ be the intersection points of the two lines
with a hyperplane orthogonal to the $x_d$-axis; and $t_{1,2}t_{2,1}$ be
the minimum transversal
of the two lines such that $t_{1,2}\in \ell_1$ and $t_{2,1}\in \ell_2$
(Fig.~\ref{fig:mt}, middle). If $3|p_1p_2|\leq |r_1r_2|$, then
$|p_1t_{1,2}|\leq \frac{2\sqrt{3}}{9}\ |r_1t_{1,2}|$
and $|p_2t_{2,1}|\leq \frac{2\sqrt{3}}{9}\ |r_2t_{2,1}|$.
\end{lemma}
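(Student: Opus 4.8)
The plan is to reduce the statement to elementary 2-dimensional trigonometry by working in the plane spanned by a pair of lines, and to exploit the hypothesis $|p_1p_2|\le \frac13 |r_1r_2|$ as a ``smallness of displacement'' condition that forces $p_1$ (resp. $p_2$) to be close to the foot of the minimum transversal. First I would set up coordinates so that the $x_d$-axis is vertical, and observe that since each line makes an angle at most $\psi\le\pi/6$ with the vertical, the horizontal cross-sections $r_1,r_2$ are well-defined and, more importantly, movement along $\ell_i$ by arc-length $t$ changes the $x_d$-coordinate by at least $t\cos\psi$. The quantity $|r_1s_1|$ measures how far along $\ell_1$ the foot $s_1$ of the minimum transversal lies relative to the horizontal slice through $r_1$; I want to show $|p_1s_1|$ is a small multiple of this. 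The key point is that both $|r_1s_1|$ and $|p_1s_1|$ are ``vertical offsets'' along $\ell_1$ (up to bounded distortion factors coming from $\cos\psi$), so it suffices to bound the vertical displacement from $p_1$ to $s_1$ in terms of the vertical displacement from $r_1$ to $s_1$.

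The main step is to compare the three segments $p_1p_2$, $r_1r_2$, and the minimum transversal $s_1s_2$. Because $\varphi\le\pi/6$ the two lines are nearly parallel, so the function $u\mapsto(\text{distance between }\ell_1\text{ and }\ell_2\text{ at ``height'' }u)$ is a ``shallow'' convex function (in fact the squared distance is a quadratic in the signed displacement along the near-common direction) minimized exactly at $s_1s_2$. The hypothesis $|p_1p_2|\le\frac13|r_1r_2|$ says $p_1,p_2$ sit near the bottom of this parabola while $r_1,r_2$ sit higher up on it. Writing $|p_1p_2|^2 = |s_1s_2|^2 + c\cdot\delta_p^2$ and $|r_1r_2|^2 = |s_1s_2|^2 + c\cdot\delta_r^2$, where $\delta_p,\delta_r$ are the (signed) offsets of the $p$- and $r$-pairs from the transversal measured along the bisecting direction and $c = c(\varphi)$ is a positive constant bounded below using $\sin(\varphi/2)$, the inequality $|p_1p_2|\le\frac13|r_1r_2|$ gives $|s_1s_2|^2 + c\delta_p^2 \le \frac19(|s_1s_2|^2 + c\delta_r^2)$, hence in particular $c\delta_p^2 \le \frac19 c\delta_r^2$, i.e. $\delta_p \le \frac13\delta_r$. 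That $\frac13$ is then amplified only by bounded geometric factors relating $\delta_p,\delta_r$ to $|p_1s_1|,|r_1s_1|$ — namely factors controlled by $\varphi\le\pi/6$ and $\psi\le\pi/6$ — and I expect the arithmetic (roughly $\frac13$ times something like $\frac{1}{\cos(\pi/6)}$-type corrections, plus the relation $|p_1s_1|\le |p_1p_2|$ or a comparable crude bound) to close out at the stated constant $\frac{2\sqrt3}{9}$. By symmetry the same argument applied to $\ell_2$ gives $|p_2s_2|\le\frac{2\sqrt3}{9}|r_2s_2|$.

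The hard part will be bookkeeping the constants: one must be careful that $|p_1p_2| = \frac13|r_1r_2|$ does not by itself imply $|p_1s_1| = \frac13|r_1s_1|$ on the nose, because $s_1s_2$ is not parallel to $p_1p_2$ or $r_1r_2$ in general, so the Pythagorean decomposition ``$|p_ip_j|^2 = |s_1s_2|^2 + (\text{offset})^2$'' only holds after projecting onto the right direction, and the offset of the pair $(p_1,p_2)$ is not simply $|p_1s_1|$ but a combination of $|p_1s_1|$ and $|p_2s_2|$. I would handle this by parametrizing $\ell_1,\ell_2$ by a common near-vertical parameter, writing the midpoint-displacement and the line-to-line ``shear'', and checking that in the regime $\varphi,\psi\le\pi/6$ the cross terms are dominated, so that $\delta_p^2$ is within a constant factor of $\max(|p_1s_1|^2,|p_2s_2|^2)$ and likewise for $\delta_r$; then the worst case of the constant is attained in a clean extremal configuration (e.g. $\psi=\pi/6$, $\varphi\to 0$ or $\varphi=\pi/6$), which I would verify numerically gives exactly $\frac{2\sqrt3}{9}=\frac{2}{3\sqrt3}$. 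An alternative, possibly cleaner route is to work directly in the 3-dimensional affine span of $\ell_1,\ell_2$ (as in the proof of Lemma~\ref{lem:mt}): place $s_1s_2$ along one coordinate axis, $\ell_1,\ell_2$ symmetric about another, and write $|p_1p_2|^2, |r_1r_2|^2, |p_1s_1|, |r_1s_1|$ all explicitly as functions of the two arc-length parameters and the fixed transversal length $h$; then the lemma becomes a single inequality among four explicit expressions, verifiable by calculus — I expect this to be the route the authors take.
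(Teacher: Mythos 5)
Your overall skeleton matches the paper's: write $|p_1p_2|^2=h^2+(a^2+b^2-2ab\cos\varphi)$ and $|r_1r_2|^2=h^2+(e^2+f^2-2ef\cos\varphi)$ with $h=|s_1s_2|$, $a=|p_1s_1|$, $b=|p_2s_2|$, $e=|r_1s_1|$, $f=|r_2s_2|$, subtract the common $h^2$ after applying $|p_1p_2|\le\frac13|r_1r_2|$, and then convert the comparison of the two quadratic expressions into the per-line bounds. But the step you leave to ``bookkeeping'' is precisely the content of the lemma, and the way you propose to do it does not work. First, the identity $|p_1p_2|^2=|s_1s_2|^2+c\,\delta_p^2$ with a single scalar offset and a single constant $c$ is not available: since $p_1,p_2$ are independent, the excess over $h^2$ is the rank-two anisotropic form $Q(a,b)=a^2+b^2-2ab\cos\varphi$, not a one-parameter quadratic. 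You acknowledge this, but your proposed repair --- that $\delta_p^2$ (equivalently $Q(a,b)$, up to the normalization $c$) is ``within a constant factor of $\max(a^2,b^2)$, and likewise for $\delta_r$'' --- is false in the relevant regime and structurally insufficient. It is false because $Q$ degenerates as $\varphi\to 0$: taking $b=a\cos\varphi$ gives $Q(a,b)=a^2\sin^2\varphi$, which is not within any absolute constant of $a^2$; the correct statement is only the lower bound $Q(a,b)\ge(\max\{a,b\})^2\sin^2\varphi$. It is insufficient because for the $r$-pair you need an \emph{upper} bound on $Q(e,f)$ in terms of $(\min\{e,f\})^2$, not $(\max\{e,f\})^2$: the conclusion must bound $a$ by $e$ and $b$ by $f$ separately, and an estimate through $\max\{e,f\}$ says nothing about the shorter of the two.

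What the paper actually does, and what is missing from your outline, is the pair of matched bounds $Q(a,b)\ge(\max\{a,b\})^2\sin^2\varphi$ and $Q(e,f)\le(\min\{e,f\})^2\,\frac{\sin^2\varphi}{\sin^2(\pi/2-\psi)}$, the latter being exactly where the hypotheses that $r_1,r_2$ lie in a \emph{common} hyperplane orthogonal to the $x_d$-axis and that both lines are within $\psi$ of vertical enter (they force the angle opposite the shorter of $e,f$ in the relevant triangle to be at least $\pi/2-\psi$, and the law of sines gives the bound). The two $\sin^2\varphi$ factors then cancel after dropping $h^2$, yielding $\max\{a,b\}\le\frac{\min\{e,f\}}{3\cos\psi}\le\frac{2\sqrt3}{9}\min\{e,f\}$ --- so the final constant is independent of $\varphi$ and depends only on $\psi\le\pi/6$ and the factor $\frac13$. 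Your proposal never identifies this cancellation mechanism nor the quantitative use of the common-hyperplane condition, and instead relies on comparability claims that blow up as $\varphi\to0$ (and on the auxiliary inequality $|p_1s_1|\le|p_1p_2|$, which also fails when $b\approx a\cos\varphi$ and $h$ is small). Deferring these points to a numerical check of an extremal configuration does not fill the gap, so as written the argument does not establish the stated constant.
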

\begin{proof}
Let $h=|t_{1,2}t_{2,1}|$ be the distance between the two lines.
Put $a=|p_1t_{1,2}|$, $b=|p_2t_{2,1}|$, $e=|r_1t_{1,2}|$, and $f=|r_2t_{2,1}|$.
Let $\varphi_p\in \{\varphi, \pi-\varphi\}$ be the angle between the rays
$\overrightarrow{t_{1,2}p_1}$ and $\overrightarrow{t_{2,1}p_2}$.
By the law of cosines, we have $|p_1p_2|^2=h^2+a^2+b^2-2ab\cos \varphi_p$.
The sum of the last three terms in this expression is
$c^2=a^2+b^2-2ab\cos \varphi_p$, where $c$ is the
third side of a triangle with two adjacent sides of lengths $a$ and $b$
that meet at angle $\varphi_p$. Denote by $\beta$ the angle of this triangle
opposite to the \emph{longer} of $a$ and $b$. Then the law of sines yields
$$ a^2+b^2-2ab\cos\varphi_p =
(\max\{a,b\})^2\cdot \frac{\sin^2\varphi_p}{\sin^2\beta}
\geq (\max\{a,b\})^2 \sin^2\varphi_p
= (\max\{a,b\})^2 \sin^2\varphi. $$
Consequently,
\begin{equation} \label{E23}
|p_1p_2|^2 \geq h^2 + (\max\{a,b\})^2\sin^2 \varphi.
\end{equation}

Let $\varphi_r\in \{\varphi, \pi-\varphi\}$ be the angle between
the rays $\overrightarrow{t_{1,2}r_1}$ and $\overrightarrow{t_{2,1}r_2}$.
We show that $\varphi_r=\varphi$. Indeed, since $r_1r_2$ lies in a
hyperplane orthogonal to the $x_d$-axis, and the direction of each line
differs from the $x_d$-axis by at most $\psi\in [0,\frac{\pi}{6}]$,
the directions of $r_1r_2$ and the minimal transversal $t_{1,2}t_{2,1}$
differ by at most $\frac{\pi}{6}$. If $\varphi_r=\pi-\varphi\geq \frac{5\pi}{6}$,
then $|r_1t_{1,2}|\leq h\tan \frac{\pi}{6}$ and $|r_2t_{2,1}|\leq h\tan \frac{\pi}{6}$.
The triangle inequality yields
$|r_1r_2|\leq |r_1t_{1,2}|+|t_{1,2}t_{2,1}|+|t_{2,1}r_2|
\leq (1+2\tan\frac{\pi}{6})h= (1+\frac{2\sqrt{3}}{3})h\leq 2.16h$,
in contradiction with the assumed inequality $|r_1r_2|\geq 3|p_1p_2|\geq 3h$.

By the law of cosines we have
$$ |r_1r_2|^2=h^2+e^2+f^2-2ef\cos\varphi_r =h^2+e^2+f^2-2ef\cos \varphi. $$
Consider a triangle where $e$ and $f$ are adjacent sides
parallel with $r_1t_{1,2}$ and $r_2t_{2,1}$ respectively,
that meet at angle $\varphi$. Since the directions of $\ell_1$ and $\ell_2$
differ from vertical by at most $\psi$, the angle opposite to the
shorter of $e$ and $f$ is at least $\frac{\pi}{2}-\psi$ (see Fig.~\ref{fig:mt}, right).
Hence the law of sines yields  $e^2+f^2-2ef\cos \varphi \leq
(\min\{e,f\})^2\cdot \frac{\sin^2\varphi}{\sin^2(\pi/2-\psi)}$, and
consequently
\begin{equation} \label{E24}
|r_1r_2|^2 \leq
h^2 + (\min\{e,f\})^2\cdot \frac{\sin^2\varphi}{\sin^2 (\pi/2-\psi)}.
\end{equation}

The inequality $3|p_1p_2|\leq |r_1r_2|$ in combination with
inequalities~\eqref{E23} and~\eqref{E24} implies
\begin{eqnarray}
9\left(h^2 + \left(\max\{a,b\}\right)^2 \sin^2 \varphi \right)
&\leq& h^2 +\left(\min\{e,f\}\right)^2\frac{\sin^2\varphi}{\sin^2 (\pi/2-\psi)}, \nonumber\\
9\left(\max\{a,b\}\right)^2 \sin^2 \varphi
&\leq& \left(\min\{e,f\}\right)^2\frac{\sin^2\varphi}{\sin^2 (\pi/2-\psi)},
\end{eqnarray}

and further (after canceling $\sin^2\varphi$ and taking square roots) that
\begin{equation} \label{E25}
\max\{a,b\} \leq \frac{\min\{e,f\}}{3\sin(\pi/2-\psi)}.
\end{equation}

If $\psi\in [0,\frac{\pi}{6}]$, then
$\sin\left(\frac{\pi}{2}-\psi\right) \geq \sin\left(\frac{\pi}{3}\right)
  =\frac{\sqrt{3}}{2}$
and so $\max\{a,b\}\leq \frac{2\sqrt{3}}{9}\min\{e,f\}$.
It follows that $a\leq \frac{2\sqrt{3}}{9}e$ and $b\leq \frac{2\sqrt{3}}{9}f$,
as required.
\end{proof}

\paragraph{Group Steiner tree yields a constant-factor approximation
  for TSP with lines.} The main result of this section is the following lemma.
Theorem~\ref{thm:lines} then directly follows from this lemma.
\begin{lemma}\label{lem:line}
Let $\L$ be a set of $n$ lines in $\RR^d$. Then the length of a
minimum group Steiner tree in $G_{\L}$ is $O(1)$ times the length
of a minimum TSP tour for the lines in $\L$.
\end{lemma}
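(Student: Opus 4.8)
The plan is to construct, from an optimal TSP tour $\opt=\opt(\L)$, a connected subgraph of $G_{\L}$ (in fact a tree after taking a spanning tree) that spans all $n$ groups $V_1,\ldots,V_n$ and whose total length is $O(1)\cdot\len(\opt)$. Since the minimum group Steiner tree is no longer than any such tree, this yields the lemma. The natural skeleton: walk along $\opt$, which visits the lines $\ell_{i_1},\ell_{i_2},\ldots$ in some cyclic order at points $q_1,q_2,\ldots$; each consecutive hop $q_kq_{k+1}$ connects two lines, and I must replace this hop by a path inside $G_{\L}$ of comparable length. The tree I build is the union of all these replacement paths (it is connected because consecutive paths share an endpoint on a common line, and then I contract to a spanning tree; contracting and shortcutting only decreases length).

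\medskip\noindent\textbf{Case analysis on consecutive hops.} For a hop $q_kq_{k+1}$ from $\ell$ to $\ell'$, distinguish two cases according to the angle $\varphi$ between $\ell$ and $\ell'$. \emph{(i) Large angle} $\varphi>\pi/6$: here I route through the minimum transversal $s_1s_2\in S_1$ between $\ell$ and $\ell'$, i.e. I use the path $q_k\to s_1$ along $\ell$, then $s_1\to s_2$, then $s_2\to q_{k+1}$ along $\ell'$. By Lemma~\ref{lem:mt} (with $\varphi_0=\pi/6$, so the constant is $\sqrt{3/(1-\cos(\pi/6))}$, an absolute constant) this path has length $O(1)\,|q_kq_{k+1}|$, and all its pieces are edges of $G_{\L}$ (the transversal is in $S_1$; the sub-segments of $\ell,\ell'$ between consecutive vertices are in $G_{\L}$ by construction). \emph{(ii) Small angle} $\varphi\le\pi/6$: then $\ell$ and $\ell'$ lie in a common $\L_j$ (pick $j$ with both directions in $B(L_j,\pi/12)$, which exists since the $\pi/12$-balls cover $\mathrm{Gr}(1,d)$; then both are within $\pi/6$ of $L_j$, so $\psi\le\pi/6$). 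Now I want to route through a 2-spanner in some hyperplane $H_j(p)$ orthogonal to $L_j$. The issue is which spanner: the spanners in $S_2$ are only erected at hyperplanes through endpoints of $S_1$-segments.

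\medskip\noindent\textbf{Handling runs of near-parallel lines — the main obstacle.} The real difficulty, and where Lemma~\ref{lem:par} enters, is a maximal run of consecutive hops all within $\L_j$, say through lines $\ell^{(1)},\ldots,\ell^{(p)}$. I would handle it as follows. Let $s_1s_2\in S_1$ be the minimum transversal between $\ell^{(1)}$ and $\ell^{(p)}$ (the two ``anchor'' lines of the run), with $s_1\in\ell^{(1)}$, and consider the hyperplane $H:=H_j(s_1)$, at which a 2-spanner on $P(s_1,j)=\{H\cap\ell:\ell\in\L_j\}$ was built and added to $S_2$. The plan is: from $q_k$ (on $\ell^{(1)}$, the start of the run) go along $\ell^{(1)}$ to $s_1\in H$; then traverse the 2-spanner in $H$ visiting the points $H\cap\ell^{(1)},H\cap\ell^{(2)},\ldots,H\cap\ell^{(p)}$; at each of these, for bookkeeping, that group is already hit. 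Finally from $H\cap\ell^{(p)}$ go along $\ell^{(p)}$ back to $q_{k'}$, the endpoint of the run. I must bound three things: (a) the two vertical detours $|q_ks_1|$ and $|q_{k'}\,(H\cap\ell^{(p)})|$ — these are where Lemma~\ref{lem:par} is needed: either the run is ``short'' in the sense that the $q$'s stay near the minimum-transversal foot (so $|q_ks_1|$ is a constant times the relevant hop length), or some hop $|q_mq_{m+1}|$ inside the run is itself $\ge\frac13$ of the cross-distance $|r_1r_2|$ at the relevant hyperplane, in which case that hop alone pays for the detour; Lemma~\ref{lem:par} is exactly the dichotomy that makes this accounting work, giving $|q_ks_1|\le\frac{2\sqrt3}{9}|r_1s_1|$ etc. (b) the spanner traversal length: the 2-spanner on the projected points has an Eulerian walk of length $O(1)$ times the MST of those points, which is $O(1)$ times the length of the projection of the sub-tour of $\opt$ onto $H$ — and projection orthogonal to $L_j$ is a contraction in the directions near-orthogonal to all the lines, and near the $q$'s the lines are nearly parallel to $L_j$ so their feet on $H$ track the tour closely (again quantified via the $\psi\le\pi/6$ bound). (c) showing each piece is genuinely an edge of $G_{\L}$: the $\ell^{(1)},\ell^{(p)}$ segments are fine; the spanner edges are in $S_2$ by construction precisely because we anchored at $H_j(s_1)$ with $s_1$ an endpoint of an $S_1$-segment.

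\medskip\noindent\textbf{Assembling the bound.} Summing over all hops of $\opt$: large-angle hops contribute $O(1)\,|q_kq_{k+1}|$ each; each maximal near-parallel run contributes $O(1)$ times (its vertical-detour budget + its projected length), both of which are $O(1)$ times the length of the corresponding arc of $\opt$ by (a)–(b); and distinct runs/hops use disjoint arcs of $\opt$, so the totals add to $O(1)\,\len(\opt)$. The resulting subgraph of $G_{\L}$ is connected and meets every group, so its spanning tree is a group Steiner tree of length $O(1)\,\len(\opt)$, proving $\mathrm{GST}(G_{\L})=O(1)\,\opt(\L)$. The hard part is case (ii): making the charging scheme for the near-parallel runs precise — in particular choosing the anchor hyperplane $H_j(s_1)$ consistently so every spanner edge used actually lies in $S_2$, and invoking Lemma~\ref{lem:par} to pay for the in/out vertical detours without double-charging arcs of $\opt$.
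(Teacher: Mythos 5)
Your high-level skeleton (charge large-angle hops to minimum transversals via Lemma~\ref{lem:mt}; handle near-parallel stretches with the $S_2$ spanners and Lemma~\ref{lem:par}) is the same as the paper's, but your treatment of a near-parallel run has a genuine gap. You anchor the whole run at the single hyperplane $H=H_j(s_1)$ through an endpoint of the minimum transversal of the run's \emph{first and last} lines, and traverse the spanner in $H$ through all the feet $H\cap\ell^{(1)},\ldots,H\cap\ell^{(p)}$. Two things break. First, the feet in $H$ are not controlled by the tour: the lines of the run agree with $L_j$ only up to $\pi/6$, so two lines visited by $\opt$ at nearby points at distance $D$ above $H$ can meet $H$ at points roughly $2D\tan(\pi/6)$ apart, and $D$ is unrelated to the run's tour length; hence neither the MST of the feet nor your ``projection is a contraction'' argument bounds the spanner traversal by $O(1)$ times the corresponding arc of $\opt$. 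Second, the entry/exit detours $|q_k s_1|$ and $|q_{k'}\,(H\cap\ell^{(p)})|$ are not paid for by your dichotomy: Lemma~\ref{lem:par} relates points on a \emph{pair} of lines to \emph{their own} minimum transversal and a chosen horizontal hyperplane; applied to $(\ell^{(1)},\ell^{(p)})$ with the hyperplane through $s_1$ it degenerates (there $r_1=s_1$), and the hops inside the run involve other pairs, so no single hop ``pays'' for $|q_k s_1|$, which can be arbitrarily large compared with the run's length (the minimum transversal of two almost-parallel lines can sit far along the lines from where $\opt$ visits them). There is also a smaller slip in forming the runs: two directions at angle $\le\pi/6$ need not lie in a common ball $B(L_j,\pi/12)$, and angles can accumulate along a run; the paper instead defines blocks by deviation at most $\pi/12$ from the block's \emph{first} line, which forces the whole block into some $B(L_j,\pi/6)$.

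The paper avoids the main problem by building each block path \emph{incrementally}: from the current vertex $q_i$ it either stays in the current horizontal hyperplane (when the in-plane distance to $\ell_{i+1}$ is at most $3|p_ip_{i+1}|$, costing a spanner path of length at most $6|p_ip_{i+1}|$), or it moves along $\ell_i$ to the hyperplane through the endpoint $\hat{s}_{i+1}$ of the minimum transversal of the \emph{consecutive} pair $(\ell_i,\ell_{i+1})$, so the spanner used always exists in $S_2$, the horizontal cost is at most $\frac{4}{\sqrt{3}}|p_ip_{i+1}|$, and Lemma~\ref{lem:par} gives $|z(p_{i+1})-z(q_{i+1})|\le \frac{2\sqrt{3}}{9}|z(q_i)-z(q_{i+1})|$, which lets the total vertical travel $Z$ be bounded through the factor $\frac{9}{9-4\sqrt{3}}$ and charged to the block's arc of $\opt$. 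You correctly flag that the charging in case (ii) is the hard part, but the single-anchor scheme you sketch would not yield an $O(1)$ bound; the step-by-step hyperplane-switching construction (or something equivalent) is needed.
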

\begin{proof}
Let $\L=\{\ell_1,\ldots , \ell_n\}$, where the lines are indexed so that
an optimal TSP tour is $\opt(\L)=(p_1,\ldots , p_n)$ with $p_i\in \ell_i$,
$i=1,\ldots , n$. We show that $G_{\L}$ contains a group Steiner tree $T$
of length at most $83 \, \len(\opt(\L))$.
The argument does not use the optimality of the tour $\opt(\L)$,
\ie, for any cycle $C=(p_1,\ldots , p_n)$, $p_i\in \ell_i$, we construct
a group Steiner tree $T$ of length at most $83 \, \len(C)$.
The tree $T$ consists of a main (backbone) path $\gamma_0$, and a path attached
to each vertex of the backbone.

Decompose the cyclic sequence $(\ell_1,\ldots , \ell_n)$
into maximal subsequences, called \emph{blocks},
$$ (\ell_{\tau(i)}, \ell_{\tau(i)+1},\ldots ,
\ell_{\tau(i+1)-1}), \hspace{0.5in} i=1,2,\ldots ,k $$
for some $k\geq 1$ as follows. Let $\tau(1)=1$, and for each
$i=1,\ldots ,k-1$, let $\tau(i+1)$ be the first
index such that the directions of $\ell_{\tau(i)}$ and
$\ell_{\tau(i+1)}$ differ by more than $\frac{\pi}{12}$.
That is, the directions of the lines in the $i$-th block differ
from the direction of $\ell_{\tau(i)}$ by at most $\frac{\pi}{12}$.
By the triangle inequality, the directions of any two lines
in a block differ by at most $\frac{\pi}{6}$ (as required by Lemma~\ref{lem:par}).

Consider the sequence of the first elements of the blocks,
$(\ell_{\tau(1)}, \ell_{\tau(2)}, \ldots , \ell_{\tau(k)})$.
By construction, the directions of any two consecutive lines in the above sequence
differ by more than $\frac{\pi}{12}$. If $k\geq 2$, the ``backbone'' of the
group Steiner tree $T$ is the polygonal path
$$\gamma_0=
(t_{\tau(1),\tau(2)} \, t_{\tau(2),\tau(1)} \, t_{\tau(2),\tau(3)}
\,  t_{\tau(3),\tau(2)} \ldots  t_{\tau(k-1),\tau(k)}\, t_{\tau(k),\tau(k-1)}).$$
Lemma~\ref{lem:mt} with $\varphi_0=\frac{\pi}{12}$ implies that
$\len(\gamma_0)$ is bounded from above as follows:
\begin{eqnarray}
  \len(\gamma_0)&=& \len(t_{\tau(1),\tau(2)}\, t_{\tau(2),\tau(1)}\, t_{\tau(2),\tau(3)}\, t_{\tau(3),\tau(2)}
  \ldots  t_{\tau(k-1),\tau(k)}\, t_{\tau(k),\tau(k-1)})\nonumber\\
&\leq & \len(p_{\tau(1)}\, t_{\tau(1),\tau(2)}\, t_{\tau(2),\tau(1)}\, p_{\tau(2)})+  \ldots +
                     \len(p_{\tau(k-1)}\, t_{\tau(k-1),\tau(k)}\, t_{\tau(k),\tau(k-1)}\, p_{\tau(k)})\nonumber\\
&\leq& \sqrt{\frac{3}{1-\cos(\pi/12)}} \ \len(p_{\tau(1)} p_{\tau(2)} \ldots  p_{\tau(k)}) \nonumber\\
&\leq& 9.4\ \len(p_1 p_2 \ldots  p_n) \leq 9.4\ \len(C).\label{E:U}
\end{eqnarray}

For each block
$(\ell_{\tau(i)},\ell_{\tau(i)+1},\ldots , \ell_{\tau(i+1)-1})$, $i=1,\ldots,k$,
we attach a path $\gamma_i$ visiting the lines in this block
to the backbone $\gamma_0$. Each path $\gamma_i$ is constructed incrementally
starting from an initial vertex and an initial hyperplane containing that vertex.
If $k \geq 2$, then $\gamma_i$ starts from vertex $t_{\tau(i),\tau(i+1)}\in
\ell_{\tau(i)}\cap \gamma_0$ within hyperplane $T_{\tau(i),\tau(i+1)}$ for $i=1,\ldots , k-1$;
and $\gamma_k$ starts from vertex $t_{\tau(k),\tau(k-1)}\in \ell_{\tau(k)}\cap \gamma_0$
within hyperplane $T_{\tau(k),\tau(k-1)}$. If $k=1$ (i.e., there is only one block),
then $\gamma_0$ is not needed, and we set $T:=\gamma_1$. In this case, we construct
a path $\gamma_1$ starting from each of the $O(n^2)$ vertices on $\ell_1$ and every
possible hyperplane of the form $T_{i,j}$ containing that vertex; and then show that
one of these paths satisfies $\len(\gamma_1)\leq 9.77\,\len(C)$.

The paths $\gamma_i$, $i=1,\ldots , k$, are constructed analogously apart from the
choice of their initial vertex $q_1$ and initial hyperplane $H_1$, $q_1\in H_1$.
We explain the construction for $i=1$ only.
Consider the first block, $(\ell_1,\ell_2,\ldots,\ell_m)$, where $1\leq m\leq n$.
The path $\gamma_1$ will use transversal segments from $S_2$ between lines in
$\ell_1,\ldots , \ell_m$, and possibly some edges along the lines $\ell_1,\ldots , \ell_m$.

We construct $\gamma_1$ incrementally for a given initial vertex $q_1$ and hyperplane $H_1$,
$q_1\in H_1$. Refer to Fig.~\ref{fig:lines-block}~(right).
In each step, we maintain a vertex $q_i\in \ell_i$ of $\gamma_1$ and hyperplane $H_i$
such that $q_i\in H_i$ and $G_{\L}$ contains a complete graph on the intersection
points between the lines in $\L$ and $H_i$. Initially, we have a single-vertex path
$\gamma_1=(q_1)$, where $q_1\in \ell_1$ and $q_1\in H_1$.
We extend $\gamma_1$ in $m-1$ steps to visit some points $q_i\in \ell_i$, $i=2,\ldots , m$.
In step $i$, we would like to extend $\gamma_1$ from $q_i$ to $q_{i+1}\in \ell_{i+1}$
by a single edge in $H_i$. However, if the distance from $q_i$ to $\ell_{i+1}\cap H_i$
is more than $3\ |p_i p_{i+1}|$, then $\gamma_1$ will follow $\ell_i$ to the endpoint
$t_{i,i+1}$ of the minimal transversal between $\ell_i$ and $\ell_{i+1}$,
and reach $\ell_{i+1}$ in the hyperplane $T_{i,i+1}$ (orthogonal to $\ell_i$).

Assume that we have already built the path $\gamma_1$ up to vertex $q_i\in \ell_i$,
$i\in \{1,\ldots , m-1\}$, with a hyperplane $H_i$, $q_i\in H_i$.
We choose $q_{i+1}\in \ell_{i+1}$, the portion of $\gamma_1$ from $q_i$
to $q_{i+1}$, and the hyperplane $H_{i+1}$ as follows.
Let $r_{i+1}=\ell_{i+1}\cap H_i$ (note that $r_{i+1}$ is
a vertex of $G_{\L}$ by construction). We distinguish two cases:
\begin{itemize} \itemsep 0pt
\item If $|q_i r_{i+1}|\leq 3\ |p_i p_{i+1}|$, then let
  $q_{i+1}=r_{i+1}$, extend the path $\gamma_1$ with
  the edge $q_iq_{i+1}\subset H_i$,
  and let $H_{i+1}=H_i$.
\item Otherwise let $H_{i+1}=T_{i,i+1}$ (the hyperplane orthogonal to $\ell_i$
    and containing $t_{i,i+1}$), and let $q_{i+1}=\ell_{i+1}\cap H_{i+1}$.
  Now extend $\gamma_1$ with the segments $q_it_{i,i+1}\subset \ell_i$ and
  $t_{i,i+1}q_{i+1}\subset H_{i+1}$.
\end{itemize}

For estimating $\len(\gamma_1)$, we consider the transversal
segments and the edges along the lines in $\L$ separately.
The length of the transversal segment between $\ell_i$ and
$\ell_{i+1}$ is at most $3 \, |p_ip_{i+1}|$, and consequently,
the total length of all transversal segments in $\gamma_1$
is at most $3 \, \len(p_1 \ldots p_m)$. Indeed, in the first case $H_i$
contains the edge $q_iq_{i+1}$ of length $|q_iq_{i+1}|=|q_ir_{i+1}|\leq 3\ |p_ip_{i+1}|$.
In the second case, $H_{i+1}$ contains segment $t_{i,i+1}\, q_{i+1}$
of length
$$ |t_{i,i+1}\, q_{i+1}| \leq \frac{1}{\cos(\pi/6)}\ |t_{i,i+1}\, t_{i+1,i}|=
\frac{2}{\sqrt{3}}\ |t_{i,i+1}\, t_{i+1,i}| \leq \frac{2}{\sqrt{3}}\ |p_i p_{i+1}|, $$
where the first inequality
follows from the fact that the directions of $\ell_i$ and $\ell_{i+1}$ differ
by at most $\frac{\pi}{6}$, and so the right triangle $\Delta{t_{i,i+1}t_{i+1,i}q_{i+1}}$
has an interior angle at most $\frac{\pi}{6}$ at $t_{i,i+1}$.

It remains to bound the total length of the edges in
$\gamma_1$ that lie along the lines $\ell_1,\ldots,\ell_m$.
Let $1\leq \sigma(1)<\ldots <\sigma(h)< m$ be the subsequence
of indices such that $q_{\sigma(i)}t_{\sigma(i),\sigma(i)+1}\subset \gamma_1$
for $i=1,\ldots,h$; and put $\sigma(0)=1$ (possibly $\sigma(0)=\sigma(1)$).
By construction, the vertices $q_1,\ldots ,q_{\sigma(1)}$ of $\gamma_1$
lie in the same hyperplane $H_1=\ldots =H_{\sigma(1)}$.
We introduce a shorthand notation for the transversals:
for $i=1,\ldots, h$, let $s_{\sigma(i)}=t_{\sigma(i),\sigma(i)+1}$.
With this notation, the length of the edges in $\gamma_1$
that lie along the lines $\ell_1,\ldots,\ell_m$ is precisely
$$Z_1=\sum_{i=1}^h |q_{\sigma(i)}\, s_{\sigma(i)}|.$$

By construction, $\gamma_1$ contains the segment
$q_{\sigma(i)}s_{\sigma(i)}\subset \ell_{\sigma(i)}$
when $|p_{\sigma(i)}\, p_{\sigma(i)+1}|< \frac{1}{3}|q_{\sigma(i)}\, r_{\sigma(i)+1}|$.
In this case, Lemma~\ref{lem:par} is applicable, and it
gives $|p_{\sigma(i)}\, s_{\sigma(i)}|\leq \frac{2\sqrt{3}}{9} |q_{\sigma(i)}\, s_{\sigma(i)}|$.

For $i=1,\ldots ,m$, let $\proj_i:\RR^d\rightarrow \ell_i$
be the projection onto the line $\ell_i$ along the hyperplane $H_i$.
In particular for $i=1,\ldots ,h$, we have $q_{\sigma(i)}=\proj_{\sigma(i)} s_{\sigma(i-1)}$,
and $q_{\sigma(1)}=\proj_{\sigma(1)} q_1$, where $q_1$ is the first vertex of $\gamma_1$.
Recall that the directions of the lines $\ell_{\sigma(i)}$, $i=1,\ldots ,h$,
differ by at most $\frac{\pi}{6}$ from each other.
Consequently, for any line segment $ab$, we have
$|\proj_{\sigma(i)}(ab)|\leq |ab|/\cos\frac{\pi}{6}=\frac{2\sqrt3}{3}|ab|$.
Obviously, for any line segment $ab\subset \ell_{\sigma(i)}$,
we have $|\proj_{\sigma(i)} (ab)|=|ab|$.

We now bound $Z_1$ from above: intuitively, we estimate $|q_{\sigma(i)}\, s_{\sigma(i)}|$
by making a detour via $p_{\sigma(i-1)}\, p_{\sigma(i)}$, which can be related to
the optimal tour. This leads to an upper bound on $Z_1$ in terms of $\len(p_1\ldots p_m)$.

\begin{eqnarray}
Z_1 & = & \sum_{i=1}^h |q_{\sigma(i)}\, s_{\sigma(i)}|
  =                 |\proj_{\sigma(1)} (q_1\, s_{\sigma(1)})|
      +\sum_{i=2}^h |\proj_{\sigma(i)} (s_{\sigma(i-1)}\, s_{\sigma(i)})| \label{eq:Z}\\
&\leq &          \left(|\proj_{\sigma(1)} (q_1 p_1)|+
                 |\proj_{\sigma(1)} (p_1 p_{\sigma(1)})|+
                 |\proj_{\sigma(1)} (p_{\sigma(1)} s_{\sigma(1)})| \right)+ \nonumber\\
&    & \sum_{i=2}^h \left(|\proj_{\sigma(i)} (s_{\sigma(i-1)}p_{\sigma(i-1)})|+
                   |\proj_{\sigma(i)} (p_{\sigma(i-1)} p_{\sigma(i)})|+
                   |\proj_{\sigma(i)} (p_{\sigma(i)} s_{\sigma(i)})| \right)\nonumber\\
&\leq &\frac{2\sqrt3}{3}\, |q_1 p_1| +
       \frac{2\sqrt3}{3}\,\sum_{i=2}^h |s_{\sigma(i-1)}p_{\sigma(i-1)}|+
       \frac{2\sqrt3}{3}\,\sum_{i=1}^h |p_{\sigma(i-1)} p_{\sigma(i)}|+
       \sum_{i=1}^h  |s_{\sigma(i)} p_{\sigma(i)} | \nonumber\\
&\leq &  \frac{2\sqrt3}{3}\, |q_1 p_1| +
         \frac{2\sqrt3}{3}\, \len(p_{\sigma(0)}p_{\sigma(1)}\ldots p_{\sigma(h)})+
         \left(\frac{2\sqrt3}{3}+1\right)\frac{2\sqrt{3}}{9} \,
         \sum_{i=1}^h |q_{\sigma(i)} s_{\sigma(i)}|\nonumber\\
&\leq &  \frac{2\sqrt3}{3}\, |q_1 p_1| +  \frac{2\sqrt3}{3}\, \len(p_1\ldots p_m)+
         \frac{4+2\sqrt3}{9} \, Z_1,\nonumber
\end{eqnarray}
where we used the triangle inequality. After rearranging, we obtain
\begin{equation}\label{eq:Z3}
Z_1
\leq \frac{6(6+5\sqrt3)}{13}\,  \left( |q_1 p_1| + \len(p_1\ldots p_m)\right)
\leq 6.77\, \left( |q_1 p_1| + \len(p_1\ldots p_m)\right).
\end{equation}
It remains to bound the term $|q_1p_1|$ in \eqref{eq:Z3},
which depends on the choice of the initial vertex $q_1$ of $\gamma_1$.
We distinguish two cases.

\paragraph{Case~1: $k\geq 2$ (there are two or more blocks).}
Since $q_1=t_{1,m+1}$ in the first block, Lemma~\ref{lem:mt} yields
\begin{eqnarray} \label{eq:gamma1}
|q_1p_1| = |t_{1,m+1} p_1| &\leq& \len(p_1 t_{1,m+1}t_{m+1,1}p_{m+1})
\leq 9.4 \, |p_1p_{m+1}| =9.4 \, |p_{\tau(1)}\, p_{\tau(2)}|,  \nonumber \\
  Z_1 &\leq& 6.77 \cdot 10.4 \, \len(p_{\tau(1)}, \ldots, p_{\tau(2)}) \leq
   70.5\,\len(p_{\tau(1)}, \ldots, p_{\tau(2)}), \nonumber \\
   \len(\gamma_1) &\leq& Z_1 + 3 \, \len(p_{\tau(1)}, \ldots, p_{\tau(2)})
   \leq 73.5 \, \len(p_{\tau(1)}, \ldots, p_{\tau(2)}).
\end{eqnarray}

Analogous bounds hold for each of the first $k-1$ blocks. The last block requires
a different argument. The term $|q_1p_1|$ in \eqref{eq:Z3} corresponds
to $|t_{\tau(k-1),\tau(k)}\, p_{\tau(k-1)}|$ and
$|t_{\tau(k),\tau(k-1)}\, p_{\tau(k)}|$, respectively, in the last two blocks.
By Lemma~\ref{lem:mt}, the sum of these two terms is bounded by
\begin{equation}\label{eq:pq2}
\len(p_{\tau(k-1)} t_{\tau(k-1),\tau(k)}\, t_{\tau(k),\tau(k-1)}\, p_{\tau(k)})
\leq 9.4 \, |p_{\tau(k-1)}\, p_{\tau(k)}|.
\end{equation}
Summing over all $k$ blocks, the combination of \eqref{eq:gamma1}
and \eqref{eq:pq2} yields
\begin{equation} \label{E36}
  \sum_{i=1}^k \len(\gamma_i) \leq
  73.5\, \sum_{i=1}^{k-1} \len(p_{\tau(i)}, \ldots, p_{\tau(i+1)})
   \leq 73.5\, \len(C).
\end{equation}
Summing~\eqref{E:U} and~\eqref{E36}, we conclude that $G_{\L}$ contains
a group Steiner tree for $\L$ of length
$$ \len(\gamma_0) + \sum_{i=1}^k \len(\gamma_i) \leq
\left(9.4 + 73.5\right) \len(C) \leq 83 \, \len(C). $$

\paragraph{Case~2. $k=1$ (there is only one block).}
In this case, we have $m=n$. Recall that each vertex $q\in \ell_1$ in $G_{\L}$
is the intersection of line $\ell_1$ and some hyperplane $T_{i,j}$,
$i,j\in \{1,\ldots , n\}$.  For every vertex $q\in \ell_1$ and every hyperplane
$H$ of this form containing $q$, let $\gamma_1=\gamma_1(q,H)$ be the
path produced by the incremental process discussed above.
Note that $\gamma_1(q,H)$ visits $\ell_1,\ldots ,\ell_n$ in this order,
and the total length of transversal segments along
$\gamma_1(q,H)$ is at most $3\len(C)$ by construction.
If there is a vertex $q\in \ell_1$ and a hyperplane $H$
for which $\gamma_1(q,H)$ consists of transversal segments
only, then it is a Stener tree for $\ell_1,\ldots ,\ell_n$ of length
$\len(\gamma_1(q,H))\leq 3\, \len(C)$, as required.
Otherwise, denote by $Z_1(q,H)$ the total length
of the edges of $\gamma_1(q,H)$ along the lines in $\L$.

Extend each path $\gamma_1(q,H)$ from its last vertex in $\ell_n$ to a vertex
$q_{n+1}\in \ell_1$ in a hyperplane $H_{n+1}$ by performing one more iteration.
Denote by $\widehat{\gamma}_1(q,H)$ the resulting path.
Suppose that there is a vertex $q\in \ell_1$ and a hyperplane $H$
such that $q_{n+1}=q$ and $H_{n+1}=H$. In this case, $\widehat{\gamma}_1(q,H)$
is a tour. Since the vertices $s_{\sigma(h)}$, $q_{n+1}=q$, and $q_{\sigma(1)}$
are in the hyperplane $H_{n+1}=H=H_{\sigma(1)}$, we have
$\proj_{\sigma(1)} s_{\sigma(h)}=\proj_{\sigma(1)} q_{\sigma(1)}$;
and \eqref{eq:Z} can be replaced by
\begin{eqnarray}
Z_1(q,H) & = & \sum_{i=1}^h |q_{\sigma(i)}\, s_{\sigma(i)}|
      =                |\proj_{\sigma(1)} (s_{\sigma(h)}\, s_{\sigma(1)})|
      +\sum_{i=2}^h |\proj_{\sigma(i)} (s_{\sigma(i-1)}\, s_{\sigma(i)})| \label{eq:Z4}\\
&\leq &          \left(|\proj_{\sigma(1)} (s_{\sigma(h)}\, p_{\sigma(h)})|+
                 |\proj_{\sigma(1)} (p_{\sigma(h)}\, p_{\sigma(1)})|+
                 |\proj_{\sigma(1)} (p_{\sigma(1)}\, s_{\sigma(1)})| \right)+ \nonumber\\
&    & \sum_{i=2}^h \left(|\proj_{\sigma(i)} (s_{\sigma(i-1)}p_{\sigma(i-1)})|+
                   |\proj_{\sigma(i)} (p_{\sigma(i-1)} p_{\sigma(i)})|+
                   |\proj_{\sigma(i)} (p_{\sigma(i)} s_{\sigma(i)})| \right)\nonumber\\
&\leq & \frac{2\sqrt3}{3}\,\sum_{i=2}^{h+1} |s_{\sigma(i-1)}p_{\sigma(i-1)}|+
       \frac{2\sqrt3}{3} \left( |p_{\sigma(h)} p_{\sigma(1)}|+
       \sum_{i=2}^h |p_{\sigma(i-1)} p_{\sigma(i)}|\right) +
       \sum_{i=1}^h  |s_{\sigma(i)} p_{\sigma(i)} | \nonumber\\
&\leq &  \frac{2\sqrt3}{3}\, \len(C)+
         \left(\frac{2\sqrt3}{3}+1\right)\frac{2\sqrt{3}}{9} \,
         \sum_{i=1}^h |q_{\sigma(i)} s_{\sigma(i)}|\nonumber\\
&\leq &   \frac{2\sqrt3}{3}\, \len(C)+
         \frac{4+2\sqrt3}{9} \, Z_1(q,H).\label{eq:Z5}
\end{eqnarray}
After rearranging, we obtain
\begin{equation}\label{eq:Z6}
Z_1(q,H)
\leq \frac{6(6+5\sqrt3)}{13}\, \len(C)
\leq 6.77\, \len(C).
\end{equation}

Even if $\widehat{\gamma}_1(q,H)$ is not a tour for any vertex $q\in \ell_1$
and hyperplane $H$, the concatenation of some paths $\widehat{\gamma}_1(q,H)$
forms a cycle that we denote by $\Gamma$. The cycle $\Gamma$ is the union of
$\lambda$ paths, for some $\lambda\in \NN$, each visiting all lines in $\L$.
Similarly to \eqref{eq:Z5} and \eqref{eq:Z6}, the total length of the
segments of $\Gamma$ along the lines in $\L$ is at most
$6.77\, \lambda\, \len(C)$. Consequently, one of the $\lambda$ paths
$\gamma_1(q,H)\subset \Gamma$ satisfies $Z_1(q,H) \leq 6.77\, \len(C)$.
This path visits all lines in $\L$ and its length (including transversal
segments) is at most
$$ \len(\gamma_1(q,H)) \leq \left(3 + 6.77\right) \len(C)
\leq 9.77\, \len(C).$$

In both cases, $G_{\L}$ contains a group Steiner tree for $\L$ of length
at most $83\, \len(C)$, as required.
\end{proof}

\section{TSPN for unit disks and balls}  \label{sec:disks+balls}

In this section we prove Theorems~\ref{thm:3} and~\ref{thm:4} concerning
TSPN for unit disks and balls. Congruent disks are without a doubt among
the simplest neighborhoods~\cite{AH94,DM03}.
TSPN for unit disks is NP-hard, since when the disk centers are fixed and the radius
tends to zero, the problem reduces to a TSP for points. Given a set $S$ of $n$ points
in the plane, let $\D=\D(S,r)$ be the set of $n$ disks of radius $r$
centered at the points. It is known (and easy to argue) that the
optimal tours for the points and the disks, respectively, are
polygonal tours with at most $n$ sides.
The lengths of the optimal tours for the points and the
disks are not too far from each other. Indeed, given any tour of the
$n$ disks, one can convert it into a
tour of the $n$ centers by adding detours of length at most
$2r$ at each of the $n$ visiting points (arbitrarily selected);
see~\eg,~\cite{DM03,HHH11}.
Let $\opt(S)$ denote a shortest TSP tour
of $S$, and $\opt(S,r)$ denote a shortest TSP tour
of the disks of radius $r$ centered at the points in $S$.
Consequently, for each $n \geq 3$ and $r>0$, we have:
\begin{equation} \label{E1}
\len(\opt(S))- \len(\opt(S,r)) \leq 2 \, nr.
\end{equation}

As it is currently the case with TSP for points, the known approximation
schemes are highly impractical; see the comments in~\cite{MPS98}.
This is even more so for the approximation schemes for TSP with neighborhoods,
including disks, such as those in~\cite{BFG+09,DM03}. Designing more efficient
constant approximation algorithms remains of high interest. The obvious motivation
is to provide faster and conceptually simpler algorithmic solutions.

\subsection{Unit disks: an improved approximation}  \label{sec:disks}

\paragraph{Background.}
The current best approximation ratio for the TSP with $n$ unit disks,
$7.62$, was obtained in~\cite{DM03}. The algorithm works by reducing
the problem for $n$ disks to one for at most $n$ (representative)
points (representative points could be shared).
These points are selected after computing a \emph{line cover}
consisting of parallel lines. More generally, this ratio holds for
translates of a convex region.
An alternative approach (also from~\cite{DM03}) selects representative
points from among the centers of the disks (\ie, a suitable subset).
However, the approximation obtained in~\cite{DM03} in this way is weaker.
For instance, starting from a $(1+\eps)$-approximation for the center points
yields a ratio of $(8+\pi)(1+\eps) \leq 11.16$, provided that $\eps \leq 0.001$.
Starting from a $1.5$-approximation (with a faster algorithm) for the
center points yields a ratio of $(8+\pi) 1.5 \leq 16.72$.

Here we improve the two asymptotic approximation ratios,
from $7.62$ to $6.75$ (when using the PTAS for points),
and from $11.43$ to $8.52$ (when using the faster $1.5$-approximation
for points).
Somewhat surprisingly, we employ the latter approach with center points,
which gave previously only a weaker bound. It is worth mentioning that
the ratios for the special case of disjoint unit disks remain
unchanged, at $3.55$ and $5.32$, respectively.  We now proceed with
the details.

\paragraph{A simple packing argument.}
Let $B(x)$ denote a ball of radius $x$ centered at the origin.
Let $G=(V,E)$ be a connected geometric graph in $\RR^2$ and let
$L=\len(G)$. Let $C$ be the set of points at distance at most $x$ from
the edges and vertices of $G$. Equivalently, $C=G+B(x)$ is the
Minkowski sum of $G$ and $B(x)$. We need the following inequality;
see also~\cite[Lemma 4]{EFS09}.

\begin{lemma} \label{lem:minkowski1}
$\area(C) \leq 2Lx + \pi x^2$.
This bound cannot be improved.
\end{lemma}
\begin{proof}
Start by marking an arbitrary vertex $v_0$ of $G$;
The area covered by the Minkowski sum $B(x)+v_0$ is $\pi x^2$.
Pick an edge $uv$ of $G$ where $u$ is marked and $v$ is unmarked.
Place $B(x)$ with the center at $u$ and translate $B(x)$ along $uv$
(its center moves from $u$ to $v$), and mark $v$.
Observe that the newly covered area is at most
$2 |uv| x$. Continue and repeat this step as long as there
are unmarked vertices. Since $G$ is connected the procedure will
terminate when all vertices of $G$ are marked.
It follows that the area of $C$ is at most
$$ \pi x^2 + \sum_{uv \in E(G)} 2 |uv| x = 2Lx + \pi x^2, $$
as required.

Equality holds if and only if $G$ is a straight-line path.
Indeed, except for the first step (\ie, in each step involving an edge)
the newly covered area is strictly less than $2 |uv| x$, unless
all edges of $G$ are collinear in a straight-line path.
\end{proof}

\begin{figure}[htb]
\centerline{\epsfxsize=6.3in \epsffile{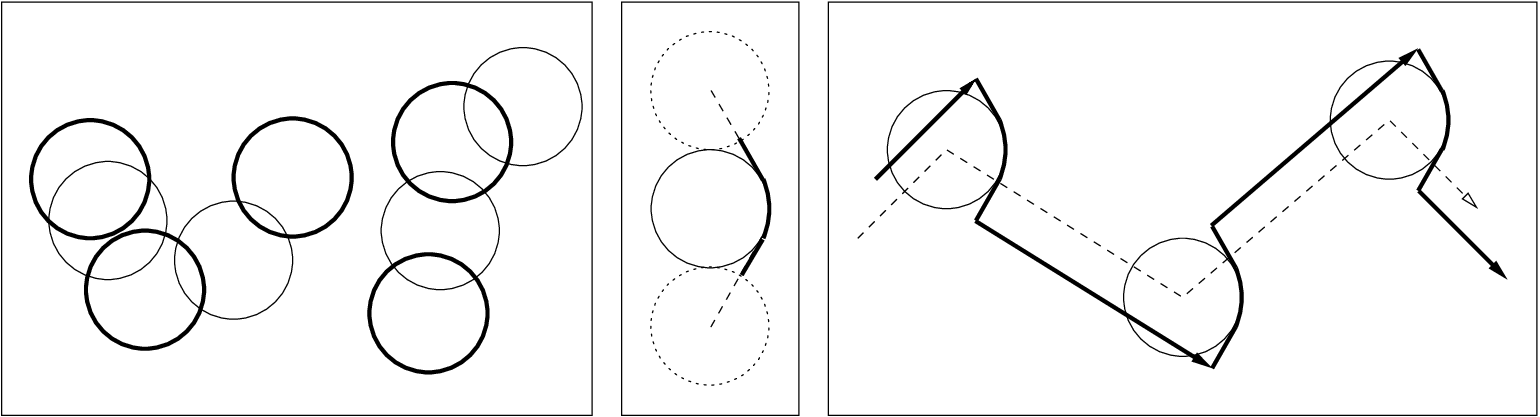}}
\caption{From left to right: (i) a line-sweep independent set (in bold
  lines); (ii) the curve $\gamma$; (iii) a part of the constructed disk tour.}
\label{f2}
\end{figure}

\paragraph{Approximation algorithm---outline.}
The idea is to first compute a maximal independent set and then an
approximate tour of the centers of the independent set, as in~\cite{DM03}.
The approximate tour of the centers is then extended by detours
so that it visits all the other disks (not in the independent set).
However the details differ significantly in both phases of the algorithm,
in order to obtain a better approximation ratio: a monotone
independent set is found, and a tailored visiting procedure is
employed that takes advantage of the special form of the independent set.

Let $\D$ be a set of unit disks.
First, compute a maximal independent set of disks $\I \subset \D$
by the following line-sweep algorithm. Select a leftmost disk $\omega \in \D$
and include it in $\I$. Remove from $\D$ all disks intersecting
$\omega$. Repeat this selection step as long as $\D$ is non-empty.

We call $\I$ a \emph{line-sweep independent set}
or \emph{$x$-monotone independent set}.
Clearly, $\I$ is a maximal independent set in $\D$, that is,
each disk in $\D \setminus \I$ intersects a disk in $\I$.
Moreover, by construction,
each disk in $\D \setminus \I$ intersects the right half-circle boundary
of a disk in $\I$.
Let $L^* =\len(\opt(\D))$ and $L_\I^* =\len(\opt(\I))$. Obviously,
$L_\I^* \leq L^*$.

\paragraph{Algorithm.}
The algorithm for computing a TSP tour of the disks is as follows.
Compute a (maximal) line-sweep independent set $\I$; write $k=|\I|$.
Next, compute $T_\I= o_1 \ldots o_k $, an $\alpha$-approximate tour of
the center points of disks in $\I$, for some constant $\alpha>1$.
If we use the PTAS for Euclidean TSP~\cite{Ar98,Mi99}, for a given
$0 <\eps < 1/2$, we have $\alpha=1+\eps$. If we use the approximation
algorithm for metric TSP due to Christofides~\cite{Ch76}, we have $\alpha=1.5$.

Write $S_\I =\{o_1,o_2,\ldots,o_k\}$.
For each disk $\omega \in \I$, let $\omega^-$ and $\omega^+$ be the
two unit disks tangent to $\omega$ from below and from above, respectively.
Let $o^-$ and $o^+$ be the centers of $\omega^-$ and
$\omega^+$, respectively. See Fig.~\ref{f2}(ii). Let
$\gamma(\omega)$ be the  open curve obtained as follows: start with
the tangent segment of positive slope from $o^-$ to $\omega$;
concatenate the arc of $\omega$ subtending a center angle of $\pi/3$
and symmetric about the $x$-axis; concatenate the tangent segment of
negative slope from $\omega$ to $o^+$.
Now remove two unit segments, one from each endpoint of the curve
obtained in the previous step.
The resulting curve is $\gamma=\gamma(\omega)$.
Observe that the open curve $\gamma(\omega)$ intersects any unit disk
from $\D$ that intersects the right half-circle boundary of $\omega$
(this includes $\omega$ as well). Let $v$ denote the vertical segment
connecting the endpoints of $\gamma$. It is easy to check that
\begin{align} \label{E6}
\len(\gamma) &=
2 \left( \frac{\pi}{6} + 2 \cos \frac{\pi}{6} -1 \right)=
2 \left( \frac{\pi}{6} + \sqrt3 -1 \right) \leq 2.512, \nonumber \\
\len(v) &= 4-\sqrt3 \leq 2.268.
\end{align}

Replace each segment $o_i o_{i+1}$ of this tour, with $i$ odd,
by a parallel segment of equal length connecting the two highest
endpoints of the curves $\gamma(\omega_i)$ and $\gamma(\omega_{i+1})$.
Similarly, replace each segment $o_i o_{i+1}$ of this tour, with
$i$ even, by a parallel segment of equal length connecting the two
lowest endpoints of $\gamma(\omega_i)$ and $\gamma(\omega_{i+1})$.
See Fig.~\ref{f2}(iii).

To obtain a tour (closed curve) we visit the disks in
$\I$ in the same order as $T_\I$. After each segment, the tour
traverses the corresponding curve $\gamma(\omega)$
(going up or down, as needed, in an alternating fashion).
If $k$ is even we proceed as above, while if $k$ is odd,
the curve $\gamma(\omega_1)$ is traversed in a circular way
(going down along $\gamma$ and up again along the vertical segment $v$)
in order to get a closed curve. We call $T$ the resulting tour.

\paragraph{Algorithm analysis.}
Since any disk in $\D$ is either in $\I$ or intersects the curve
$\gamma(\omega)$ of some disk $\omega \in \I$, and since
$T$ visits all disks in $\I$ and contains the curves $\gamma(\omega)$
of all disks in $\I$, it follows that $T$ is a valid tour for all disks in $\D$.
Further observe that the disjoint unit disks in $\I$ are contained in the
figure $C=T^*_\I + B(2)$. By Lemma~\ref{lem:minkowski1},
$ \pi |\I| \leq \area(C) \leq 4 \,\len(T^*_\I) +4\pi $, hence
\begin{equation} \label{E7}
k= |\I| \leq \frac{4}{\pi} L^*_\I +4 \leq \frac{4}{\pi} L^* +4.
\end{equation}

The total length of the detours incurred by $T$ over all disks in $\I$
is $k\, \len(\gamma)$ when $k$ is even, and  $k\, \len(\gamma)+ \len(v)$
when $k$ is odd. Hence by~\eqref{E6} the length of
the output tour is bounded from above as follows.
\begin{equation} \label{E8}
L \leq L_{S_\I} + k\, \len(\gamma)+ \len(v) \leq L_{S_\I} + (2.512 k + 2.268).
\end{equation}

Inequality~\eqref{E7} implies the following upper bound on the second
term in~\eqref{E8}.
\begin{equation} \label{E9}
2.512 k + 2.268 \leq 2.512 \left(\frac{4}{\pi} L^* +4 \right) + 2.268.
\end{equation}

We next bound from above the first term in~\eqref{E8}.
The inequality~\eqref{E1} applied to $\I$ and $S_\I$ yields
\begin{equation}
L^*_{S_\I} \leq L^*_\I + 2k.
\end{equation}

Since the algorithm computes a $\alpha$-approximation of the optimal
tour for the points in ${S_\I}$, by~\eqref{E7} we have
\begin{align} \label{E10}
L_{S_\I} &\leq \alpha L^*_{S_\I} \leq \alpha (L^*_\I + 2k)
\leq \alpha (L^*+ 2k) \nonumber \\
&\leq \alpha \left( L^*+ 2 \left(\frac{4}{\pi} L^* +4\right) \right)
\nonumber \\
&\leq \alpha \left( \left( 1+ \frac{8}{\pi} \right) L^* +8 \right).
\end{align}

Substituting into~\eqref{E8} the upper bounds in~\eqref{E10} and~\eqref{E9}
yields
\begin{align} \label{E11}
L &\leq \alpha \left( \left( 1+ \frac{8}{\pi} \right) L^* +8 \right)
+ 2.512 \left( \frac{4}{\pi} L^* +4 \right) + 2.268 \nonumber \\
&\leq \left( \alpha \left(1+\frac{8}{\pi} \right) +
2.512 \cdot \frac{4}{\pi} \right) L^* +
(8 \alpha + 4 \cdot 2.512 + 2.268) \nonumber \\
&\leq (3.5465 \alpha + 3.1984) L^* + (8 \alpha + 12.32).
\end{align}

For $\alpha=1+\eps$ (using the PTAS for the center points),
the length of the output tour is
$L \leq 6.75  \, L^* + 20.4$, assuming that $\eps \leq 0.001$.
A more precise calculation along the lines above yields the
following upper on the main term (in $L^*$); the constant factor
appears in Theorem~\ref{thm:3}; note also that $1/0.53 > 1.8$, which
explains the other parameter in Theorem~\ref{thm:3}.
$$ \left(\frac{7}{3} + \frac{8\sqrt3}{\pi}\right)
\left( 1+ \frac{\left(1+\frac{8}{\pi}\right) \eps}
{\left(\frac{7}{3} + \frac{8\sqrt3}{\pi}\right)} \right) L^* \leq
\left(\frac{7}{3} + \frac{8\sqrt3}{\pi}\right) (1+0.53 \, \eps) L^* . $$
The running time is dominated by that of computing
a $(1+\eps)$-approximation of the optimal tour of $n$ points in $\RR^2$.

For $\alpha=1.5$ (using the algorithm of Christofides for the center points),
the length of the output tour is
$L \leq 8.52  \, L^* + 24.4$.
The running time is dominated by that of computing a minimum-length
perfect matching on $n$ points in the plane ($n$ even), \eg,
$O(n^{3/2} \log^5{n})$ by using the algorithm of Varadarajan~\cite{Va98}.

\paragraph{Remarks.} 1. If the input consists of pairwise-disjoint (unit)
disks, then~\eqref{E10} yields improved approximations. These are not
new: the case $\alpha=1+\eps$ was already analyzed in~\cite{DM03};
we just list them for comparison.
For $\alpha=1+\eps$,~\eqref{E10} yields
$L \leq 3.55 \, L^* + 8.01$, assuming that $\eps \leq 0.001$.
For $\alpha=1.5$,~\eqref{E10} yields
$L \leq 5.32 \, L^* + 12$.
The approximation ratio $3.55$ for disjoint unit disks is probably far from
tight; the current best lower bound is $2$, see~\cite{DM03}.
The example in~\cite[Fig.~4]{HHH11} is yet another instance with a
ratio (lower bound) of $2$.
Hence the approximation ratio $6.75$ for unit disks (which uses the
above) is probably also far from tight.

\smallskip
2. A simple example shows that one cannot extend the
above approach to disks of arbitrary radii. Let $x \geq 1$.
See Fig.~\ref{f5}~(left)
where $n=3$, and  Fig.~\ref{f5}~(right) for its analogue with
arbitrarily large $n$. Let $x \to \infty$ and $\eps \to 0$.

(i) Suppose that we first compute a maximal independent set $\I$ in a greedy
manner, by selecting disks in increasing order of their radii.
Further suppose that we start by computing $T$, a constant approximation for the
shortest TSP tour on $\I$, for instance by using the algorithm of
de Berg~\etal~\cite{BGK+05}; recall, this algorithm works with fat,
disjoint regions. In some instances, no constant factor extension
(by adding suitable detours to visit the remaining disks)  exists.
In Fig.~\ref{f5}~(left), $\len(\opt(\I)) =2\eps$, while $\len(\opt)=4x$.
Moreover, since $x \to \infty$, no \emph{asymptotic} constant factor
can be guaranteed by this approach; indeed, for any constants
$\alpha,\beta$, there exists $x$ large enough, such that
$\alpha \, 2\eps + \beta < 4x$.
\begin{figure}[htb]
\centerline{\epsfxsize=3.9in \epsffile{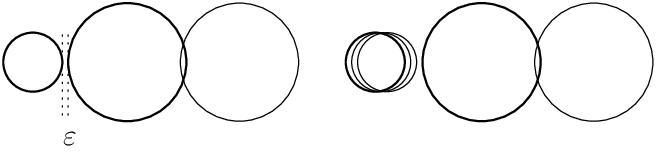}}
\caption{A set of three disks of radii $1$, $x$, and $x$,
  centered at $0$, $1+x+\eps$ and $1+3x$ (left) and a set of $n$
  disks, $n \geq 3$, of radii $1,\ldots,1$, $x$ and $x$ (right). A
  maximal independent set of disks (in bold) is shown for each case.}
\label{f5}
\end{figure}

(ii) Suppose that we first compute a maximal line-sweep independent
set, as in our algorithm for unit disks. The same example depicted
in Fig.~\ref{f5}~(left) shows that no constant factor extension
(by adding suitable detours to visit the remaining disks)  exists.
Moreover, as in (i), since $x \to \infty$, no \emph{asymptotic}
constant factor can be guaranteed by this approach.

\smallskip
3. Consider an algorithm that first computes a \emph{maximal}
independent set $\I$ of disks (according to some criterion), then computes
a good approximate tour of the disks in $\I$, and then extends this
tour with the boundary circles of the disks in $\I$ (in some way).
Observe that the length of the overall detour incurred in this way is
proportional to $\sum_{i \in \I} r_i$. The following claim (and
example) shows a deeper cause for which this general approach does not
give a constant approximation ratio; see also~\cite{DT14} for
refinements of this inequality and other related results.

\medskip
\emph{Claim. For every $M>0$, there exists a disk packing in the unit
  square $[0,1]^2$ with $\sum r_i \geq M$ and all disks tangent to the
  unit segment $[0,1] \times [0,0]$. }
\begin{proof}
We place disks in layers of decreasing radius. Each layer consists of
congruent disks placed in blocks in between consecutive tangent disks
of the previous layer, or in between a disk and a vertical side,
as in Fig.~\ref{f6}.
\begin{figure}[htb]
\centerline{\epsfxsize=3.1in \epsffile{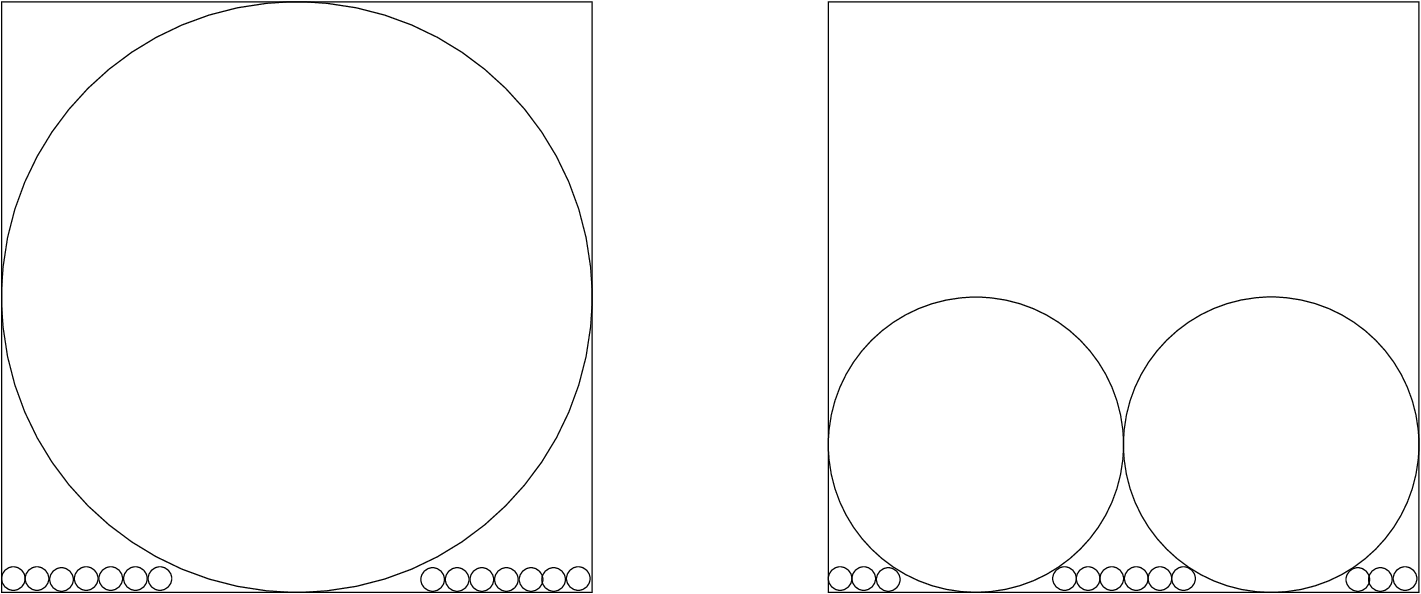}}
\caption{The first two layers of an iterative construction: $k=1$
  (left), and $k=2$ (right).}
\label{f6}
\end{figure}
The first layer consists of $k$ disks of radius $1/(2k)$, for some $k \geq 1$.
By choosing the radius of the disks in the next layer much smaller
than the radius of the disks in the current layer, one can ``cover'' any
prescribed large fraction $\rho<1$ of the length of the bottom side of
the square by disks tangent to the bottom side of the square and
having the sum of radii at least $\rho/2$. Consequently, by using
sufficiently many layers, one can achieve $\sum r_i \geq M$, as required.
\end{proof}

\subsection{Unit balls in $\RR^3$: an improved approximation}
\label{sec:balls}

We need an analogue of Lemma~\ref{lem:minkowski1}, specifically
Lemma~\ref{lem:minkowski2} below; its proof works in the same way.
Let $B(x)$ denote a ball of radius $x$.
Let $G=(V,E)$ be a connected geometric graph in $\RR^3$ and let
$L=\len(G)$. Let $C$ be the set of points at distance at most $x$ from
the edges and vertices of $G$.
Equivalently, $C=G+B(x)$ is the Minkowski sum of $G$ and $B(x)$.
\begin{lemma} \label{lem:minkowski2}
$ \vol(C) \leq \pi x^2 L + \frac{4\pi}{3} x^3$.
This bound cannot be improved.
\end{lemma}

Let $\D$ be a set of unit balls (as input). As in the planar case, we
compute a maximal independent set of disks $\I \subset \D$
by a plane-sweep algorithm. For convenience, we sweep a horizontal
plane in the positive direction of the $z$-axis.
We call $\I$ a \emph{plane-sweep independent set}
or \emph{$z$-monotone independent set}.

The algorithm computes a tour of $\D$ as follows.
First, compute a maximal $z$-monotone independent set $\I$; write $k=|\I|$.
Next, compute $T_\I= o_1 \ldots o_k $, an $\alpha$-approximate tour of
the center points of the balls in $\I$, for some constant $\alpha>1$.
Write $S_\I =\{o_1,o_2,\ldots,o_k\}$.
For each ball $\omega \in \I$, let $\Gamma=\Gamma(\omega)$ be a discrete set
of $28$ lattice points associated with $\omega$ (relative to its
center). For describing this set we will assume for convenience that
the center of $\omega$ is $(0,0,0)$. Let $a=1/\sqrt3$.
$\Gamma$~contains $16$ points in the plane $z=a$ and
$12$ points in the plane $z=3a$; see Fig.~\ref{f3}.
Specifically,
\begin{align*}
\Gamma &= \{(-3a,-3a,a), (-3a,-a,a), (-3a,a,a), (-3a,3a,a), \\
& (-a,-3a,a), (-a,-a,a), (-a,a,a), (-a,3a,a), \\
& (a,-3a,a), (a,-a,a), (a,a,a), (a,3a,a), \\
& (3a,-3a,a), (3a,-a,a), (3a,a,a), (3a,3a,a)\} \\
&\cup \{(-3a,-a,3a), (-3a,a,3a),
(-a,-3a,3a), (-a,-a,3a), (-a,a,3a), (-a,3a,3a), \\
& (a,-3a,3a), (a,-a,3a), (a,a,3a), (a,3a,3a),
(3a,-a,3a), (3a,a,3a)\}.
\end{align*}

One can check that the points in $\Gamma$ admit a Hamiltonian path
in which each edge has length $2a$,
say $\xi(\Gamma)=\gamma_1,\gamma_2,\ldots,\gamma_{28}$,
starting at $\gamma_1=(-a,-3a,a)$ and ending at $\gamma_{28}=(-a,-3a,3a)$.

\begin{figure}[htb]
\centerline{\epsfxsize=2.1in \epsffile{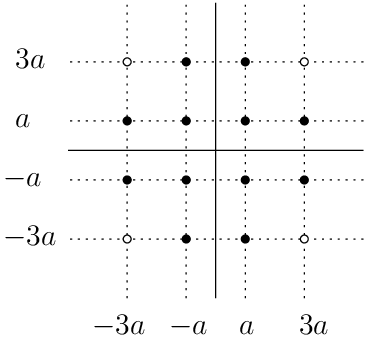}}
\caption{The set $\Gamma$ has $16$ points with $z=a$ and
$12$ points with $z=3a$; $|\Gamma|=28$.
The hollow circles indicate the four missing points in the
plane $z=3a$.}
\label{f3}
\end{figure}

We will prove shortly that any unit ball that intersects
$\omega$ from above (\ie, the $z$-coordinate of its center is
non-negative) contains at least one of the points in $\Gamma(\omega)$.
Moreover, this also holds for $\omega$ itself.

We modify (extend) the tour $T_\I= o_1 \ldots o_k $ as follows.
Assume first that $k$ is even.
We replace each segment $o_i o_{i+1}$ of this tour, with $i$ odd,
by a parallel segment of equal length connecting
$\gamma_1 \in \Gamma(\omega_i)$ with
$\gamma_1 \in \Gamma(\omega_{i+1})$.
Similarly, we replace each segment $o_i o_{i+1}$ of this tour, with
$i$ even, by a parallel segment of equal length connecting
$\gamma_{28} \in \Gamma(\omega_i)$ with
$\gamma_{28} \in \Gamma(\omega_{i+1})$.
To obtain a tour, we visit the balls in
$\I$ in the same order as $T_\I$. After each segment, the tour
visits all the $28$ points in the corresponding set $\Gamma(\omega)$
by using the Hamiltonian path $\xi(\Gamma)$
and then continues with the next segment, etc.
This extension procedure can be adapted to work for odd $k$
\emph{without} incurring any increase in cost: specifically,
the first cycle of period $2$ is replaced by a cycle of period $3$.
For odd $k$, the output TSP tour has the form
$T=\xi_1 \xi_2 \xi_3 \xi \xi^R \xi \xi^R \ldots \xi \xi^R$,
rather than the form $T=\xi \xi^R \xi \xi^R \ldots \xi \xi^R$ (for $k$ even).
Here $\xi^R$ is the path $\xi$ traversed in the opposite direction, and
$\xi_1,\xi_2,\xi_3$ are three suitable Hamiltonian paths on $\Gamma$
(details are omitted).

\paragraph{Algorithm analysis.}
The analysis of the approximation ratio is similar to that in the
planar case. The disjoint unit balls in $\I$ are contained in the
body $C=T^*_\I + B(2)$. By Lemma~\ref{lem:minkowski2},
$$ \frac{4\pi}{3} |\I| \leq \vol(C)
\leq 4\pi \,\len(T^*_\I) + \frac{4\pi}{3} \, 8, $$
hence
\begin{equation} \label{E17}
k= |\I| \leq 3 \left(L^* + \frac83 \right) = 3 L^* +8.
\end{equation}

The total length of the detours incurred by $T$ over all
balls in $\I$ is bounded from above by
\begin{equation} \label{E18}
(28-1) 2a k = 27 \frac{2}{\sqrt3} k = 18 \sqrt3 k.
\end{equation}

It follows that the length of the output tour is bounded from above as
follows.
\begin{equation} \label{E19}
L \leq L_{S_\I} + 18 \sqrt3 k.
\end{equation}

The upper bound on $L_{S_\I}$ (analogue of~\eqref{E10}) is
\begin{align} \label{E20}
L_{S_\I} &\leq \alpha L^*_{S_\I} \leq \alpha (L^*_\I + 2k)
\leq \alpha (L^*+ 2k)
\leq \alpha ( L^*+ 2 (3 L^* +8) )  \nonumber \\
&= 7 \alpha L^* + 16 \alpha.
\end{align}

The upper bound on $18 \sqrt3 k$ (analogue of~\eqref{E9}) is
\begin{equation} \label{E21}
18 \sqrt3 k \leq 18 \sqrt3 (3 L^* +8) = 54 \sqrt3 L^* + 144 \sqrt3.
\end{equation}

Substituting into~\eqref{E19} the upper bounds in~\eqref{E20} and~\eqref{E21}
yields
\begin{align} \label{E22}
L &\leq (7 \alpha L^* + 16 \alpha) + (54 \sqrt3 L^* + 144 \sqrt3) \nonumber \\
&= (7 \alpha + 54 \sqrt3)  L^* + (16 \alpha + 144 \sqrt3).
\end{align}

For $\alpha=1+\eps$ (using the PTAS for the center points),
the length of the output tour is
$L \leq 100.61  \, L^* + 265.6$, assuming that $\eps \leq 0.01$.
For $\alpha=1.5$ (using the algorithm of Christofides for the center points),
the length of the output tour is
$L \leq 104.1  \, L^* + 273.5$.
The running time is dominated by that of computing a minimum-length
perfect matching on $n$ points in $\RR^3$ ($n$ even), \eg, $O(n^{3})$~\cite{Ga76}.

\begin{lemma} \label{L5}
Let $\omega$ and $\omega'$ be two intersecting unit balls, centered at
$(0,0,0)$ and $(x,y,z)$, respectively, where $z \geq 0$.
Then $\omega$ contains a point in $\Gamma(\omega)$.
\end{lemma}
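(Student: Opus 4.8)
\medskip
\noindent\textbf{Proof plan.}
The plan is to first restate the claim in the convenient form ``the center $c'=(x,y,z)$ of $\omega'$ lies within Euclidean distance $1$ of some point of $\Gamma(\omega)$''. Indeed $\omega'$ has radius $1$, so $\gamma\in\omega'$ iff $|c'\gamma|\le 1$, and two unit balls intersect iff their centers are at distance $\le 2$; hence it suffices to prove that every point $(x,y,z)$ with $x^2+y^2+z^2\le 4$ and $z\ge 0$ lies within distance $1$ of $\Gamma(\omega)$. (Taking $\omega'=\omega$, \ie the center $(0,0,0)$, recovers the fact asserted above that $\omega$ itself contains a point of $\Gamma(\omega)$.) Writing $a=1/\sqrt3$, I would use throughout the exact identities $3a^2=1$ and $12a^2=4$, together with the crude bounds $|x|,|y|,|z|\le 2<4a$ implied by $x^2+y^2+z^2\le 4$.

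The core of the argument is a two-case split on the height $z$; in each case I round $(x,y)$ to the nearest vertex of the planar $2a$-spacing grid $\{-3a,-a,a,3a\}^2$ carried by the appropriate horizontal layer of $\Gamma(\omega)$. The routine sub-fact I would record first is that, since $|x|\le 2<4a$, the value $x'\in\{-3a,-a,a,3a\}$ closest to $x$ satisfies $|x-x'|\le a$ (the worst case $|x|=2$ is fine because $2-3a<a$), and likewise for $y$. Then: \emph{Case 1}, $0\le z\le 2a$: use the $16$ points of $\Gamma(\omega)$ in the plane $z=a$; rounding gives $(x',y',a)\in\Gamma(\omega)$ with $|x-x'|,|y-y'|,|z-a|\le a$, so the squared distance to $c'$ is at most $3a^2=1$. \emph{Case 2}, $2a<z\le 2$: here $z^2>4a^2=4/3$, hence $x^2+y^2<8/3$, so $\min\{|x|,|y|\}<2a$; assuming $|y|<2a$ (the other subcase is symmetric), the grid value nearest $y$ is $\pm a$, so rounding $(x,y)$ in the plane $z=3a$ yields $(x',y',3a)$ with $y'=\pm a$, which is not one of the four excluded corners $(\pm 3a,\pm 3a,3a)$ and therefore lies in $\Gamma(\omega)$; since $|z-3a|\le a$ on $[2a,4a]\supseteq(2a,2]$, again the squared distance is at most $3a^2=1$. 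Cases 1 and 2 cover $z\in[0,2]$, which completes the proof.

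The one step I expect to require actual thought, rather than a one-line coordinate estimate, is the claim in Case 2 that the hypothesis $z>2a$ already forces $(x,y)$ out of the corner Voronoi cells $[2a,\infty)^2$ of the grid --- those cells meet the disk $x^2+y^2<8/3$ only in the single limit point $(2a,2a)$. This is exactly the reason the four corner points could be omitted from the plane $z=3a$ without losing the covering property, and it is where the choice $a=1/\sqrt3$ is used tightly: the extremal configuration is $x=y=z=2a$, which sits on the Case~1/Case~2 boundary and lies at distance exactly $1$ from $(a,a,a)\in\Gamma(\omega)$. Everything else reduces to Pythagoras and the rounding estimate above.
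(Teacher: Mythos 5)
Your proof is correct and follows essentially the same route as the paper's: a two-case split at $z=2a$, rounding $(x,y,z)$ to the nearest lattice point of the appropriate layer with per-coordinate error at most $a$ (so squared distance at most $3a^2=1$), and using $x^2+y^2\le 8/3$ in the upper case to show the rounded point avoids the four omitted corners. The only cosmetic difference is that the paper first reduces to $x,y\ge 0$ by symmetry and phrases the corner-avoidance as the disk $x^2+y^2\le 8/3$ missing the interior of $[2a,3a]^2$, whereas you argue directly via $\min\{|x|,|y|\}<2a$.
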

\begin{proof}
By symmetry, it suffices to prove the claim when
$x,y \geq 0$. We therefore have $x,y,z \geq 0$ and
$x^2 + y^2 +z^2 \leq 4$. We distinguish two cases, depending on
whether $z \leq 2a$ or $z \geq 2a$. If  $z \leq 2a$, we show that
$\omega$ contains a point of $\Gamma$ in the lower plane
$\sigma_1 \ : \ z=a$;
if $z \geq 2a$, we show that $\omega$ contains a point of $\Gamma$ in
the higher plane $\sigma_3 \ : \ z=3a$.
Write $\Gamma_1 = \Gamma \cap \sigma_1$, and
$\Gamma_3 = \Gamma \cap \sigma_3$.

\smallskip
\emph{Case 1:}  $z \leq 2a$. Since $x^2 + y^2 +z^2 \leq 4$, we have
$\max (x,y) \leq 2 < 4a$. The closest lattice point
$\gamma=(\gamma_x,\gamma_y,\gamma_z) \in \Gamma_1$
to $(x,y,z)$ satisfies
$$ |x-\gamma_x| \leq a, \ \ |y-\gamma_y| \leq a, \ \
{\rm and} \ \  |z-\gamma_z| \leq a, $$
thus
$$ (x-\gamma_x)^2 + (y-\gamma_y)^2 + (z-\gamma_z)^2 \leq
3a^2=1, $$
as required.

\smallskip
\emph{Case 2:}  $z \geq 2a$. Since $x^2 + y^2 +z^2 \leq 4$, we have
$x^2 +y^2 \leq 4-4a^2 =8/3$. Observe that the disk $x^2 +y^2 \leq 8/3$
does not intersect the interior of the square $[2a,3a]^2$
in the plane $z=0$. Thus the projection of $(x,y,z)$ onto the plane $z=0$
is contained in $[0,3a]^2 \setminus (2a,3a]^2$. This implies that
the closest lattice point
$\gamma=(\gamma_x,\gamma_y,\gamma_z) \in \Gamma_3$
to $(x,y,z)$ satisfies
$$ |x-\gamma_x| \leq a, \ \ |y-\gamma_y| \leq a, \ \
{\rm and} \ \  |z-\gamma_z| \leq a, $$
and the conclusion follows as in Case 1.
\end{proof}

\paragraph{Remark.} Analogous to the planar case, if the input
consists of pairwise-disjoint (unit) balls, then~\eqref{E20} yields
improved approximations.
For $\alpha=1+\eps$,~\eqref{E20} yields
$L \leq 7.01\, L^* + 16.1$, assuming that $\eps \leq 0.001$.
For $\alpha=1.5$,~\eqref{E20} yields
$L \leq 10.5 \, L^* + 24$.

\paragraph{Generalization to higher dimensions.}
The technique in this section generalizes to congruent balls in $\RR^d$ for
any fixed $d\geq 4$. First, the plane-sweep algorithm does so and yields an
independent set $\I$. Then compute an $\alpha$-approximate tour $T_\I$ of the center
points of the balls in $\I$ for a small $\alpha \leq 1.5$.

\begin{figure}[htb]
\centerline{\epsfxsize=3.2in \epsffile{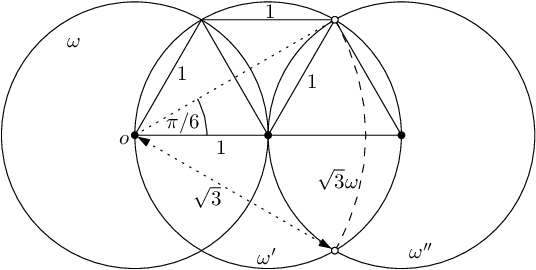}}
\caption{A unit disk $\omega$ centered at $o$ intersects two unit disks,
  $\omega'$ and $\omega''$, whose centers are at distance 1 and 2 from $o$.
  Both $\omega'$ and $\omega''$ intersects the boundary of $\sqrt{3}\omega$
  in a spherical cap of radius $\sqrt{3}\cdot \pi/6$.}
\label{fig:sphericalcap}
\end{figure}

For each ball $\omega \in \I$, we construct a finite point set
$\Gamma=\Gamma(\omega)$ with the property that any unit ball that
intersects $\omega$ contains at least one of the points in $\Gamma(\omega)$.
Consider a unit ball $\omega'$ that intersects $\omega$. If the distance between
their centers is less than 1, then $\omega'$ contains the center of $\omega$;
otherwise $\omega'$ intersects the boundary of $\sqrt{3}\, \omega$ (i.e., the ball of
radius $\sqrt{3}$ concentric with $\omega$) in a spherical cap of radius at least
$\sqrt{3}\, \frac{\pi}{6}$ in spherical distance (refer to Fig.~\ref{fig:sphericalcap}).
The bound $\sqrt{3}\, \frac{\pi}{6}$ is attained when the centers of $\omega$ and $\omega'$
are at distance 1 or 2 apart. Compute a maximal packing of the sphere
$\partial (\sqrt{3} \omega)$ with spherical caps of radius $\sqrt{3}\,
\frac{\pi}{12}$, starting with an arbitrary cap, and incrementally
adding interior-disjoint caps so that each touches some previous cap.

Let $\Gamma(\omega)$ contain the centers of all caps in this maximal packing
and the center of $\omega$. Suppose a unit disk $\omega'$ intersects $\omega$ but misses
$\Gamma(\omega)$. Then $\omega'$ contains a spherical cap in $\partial (\sqrt{3}\omega)$
of radius at least $\sqrt{3}\cdot \pi/6$, which contains no point in $\Gamma(\omega)$;
consequently a spherical cap with the same center and radius $\sqrt{3}\, \frac{\pi}{12}$
is disjoint from all caps in the packing, contradicting maximality.
Therefore $\Gamma(\omega)$ has the desired property.

We extend the tour $T_\I$ by suitable detours visiting all points in $\Gamma(\omega)$
for all $\omega \in \I$ and thereby obtain a tour for the input set.
The analysis of the approximation ratio is similar to the 2- and 3-dimensional cases
and uses volume arguments in $\RR^d$. Let $\vol_d(r)$ be the volume of
a ball of radius $r$ in $\RR^d$. It is well-known that
\begin{equation} \label{E26}
\vol_d(r)= \begin{cases}
\dfrac{\pi^{d/2}}{(d/2)!} \cdot r^d & {\rm if \ } d \ {\rm is \ even},
\medskip \\
\dfrac{2^d \cdot \pi^{(d-1)/2} \, ((d-1)/2)!}{d!} \cdot r^d & {\rm if \ }
d \ {\rm is \ odd}.
\end{cases}
\end{equation}

Combining~\eqref{E26} with the Stirling formula yields the following
upper bound:

\begin{lemma}\label{lem:stirling}
$$ \frac{\vol_{d-1}(1)}{\vol_d(1)} \leq (1+o(1)) \sqrt{\frac{d}{2\pi}}. $$
\end{lemma}
\begin{proof}
Write $f \sim g$ whenever $\lim_{d \to\infty} f(d)/g(d)=1$.
We distinguish two cases according to the parity of $d$.

If $d$ is even, then
\begin{eqnarray*}
\frac{{\rm Vol}_{d-1}(1)}{{\rm Vol}_d(1)}
&=& \frac{2^{d-1} \pi^{(d-2)/2} ((d-2)/2)!}{(d-1)!} \frac{(d/2)!}{\pi^{d/2}}\\
&=& \frac{2^d}{\pi} \frac{\pi^{d/2} (d/2)!}{d!} \frac{(d/2)!}{\pi^{d/2}}
\sim \frac{2^d}{\pi} \frac{(2 \pi d/2) \left( \frac{d}{2e} \right)^d}
{\sqrt{2\pi d} \left( \frac{d}{e} \right)^d} \\
&=& \frac{2^d}{\pi} \frac{\pi d}{\sqrt{2\pi d}} \frac{1}{2^d} = \sqrt{\frac{d}{2\pi}}.
\end{eqnarray*}

If $d$ is odd, then
\begin{eqnarray*}
\frac{{\rm Vol}_{d-1}(1)}{{\rm Vol}_d(1)}
&=& \frac{\pi^{(d-1)/2}}{((d-1)/2)!} \, \frac{d!}{2^d \pi^{(d-1)/2} \, ((d-1)/2)!}\\
&=& \frac{d!}{2^d ((d-1)/2)! ((d-1)/2)!}
\sim \frac{\sqrt{2\pi d} \left( \frac{d}{e} \right)^d}
{2^d \, 2 \pi \frac{d-1}{2} \left( \frac{d-1}{2e} \right)^{d-1}} \\
&=& \frac{\sqrt{2\pi d} \, d^d \, 2^{d-1} \, e^{d-1}}{\pi \, e^d \,
  2^d \, (d-1)^d}
\sim \frac{\sqrt{2\pi d}}{2 e \pi} \, e = \sqrt{\frac{d}{2\pi}}.
\end{eqnarray*}
\vskip-17pt
\end{proof}

By Lemma~\ref{lem:stirling}, a volume argument analogous to~\eqref{E17} yields
$$ k =|\I| \leq \frac{\vol_{d-1}(2) \, L^*+\vol_d(2)}{\vol_d(1)}
\leq  (1+o(1)) \sqrt{\frac{d}{2\pi}} \, 2^{d-1} L^*+2^d. $$

The surface area of a sphere of radius $r$ in $\RR^d$
is $\area_{d-1}(r)=2\pi r \vol_{d-2}(r)$,
and the surface area of a spherical cap of radius $r\varphi$ is
bounded from below by $\vol_{d-1}(r\sin\varphi)$. A volume argument yields
\begin{equation}\label{eq:tau}
|\Gamma|
\leq \frac{\area_{d-1}(\sqrt{3})}{\vol_{d-1}(\sqrt{3}\sin(\pi/12))} +1
\leq \frac{2\pi \vol_{d-2}(1)}{(\sin(\pi/12))^{d-1}\vol_{d-1}(1)} +1
\leq (1+o(1)) \frac{\sqrt{2 \pi d}}{(\sin(\pi/12))^{d-1}}.
\end{equation}
If two spherical caps of radius $\sqrt{3}\, \frac{\pi}{12}$ are in
contact on the sphere $\partial (\sqrt{3}\omega)$, then the distance
between their centers is $2 \sqrt{3} \sin\frac{\pi}{12}$.
By construction, the length of a minimum spanning tree of $\Gamma$ is
$$ (|\Gamma|-2) \, 2 \sqrt{3} \sin \frac{\pi}{12} +\sqrt{3}\leq
(1+o(1)) \frac{2 \sqrt{6\pi d}}{(\sin(\pi/12))^{d-2}}, $$
and the length of a Hamiltonian cycle $\xi$ of $\Gamma$ is at most
twice this length. Consequently, we obtain a tour of length
$$ L\leq \alpha L^*+2k\, \len(\xi)
\leq \alpha L^*+ 2\left((1+o(1)) \sqrt{\frac{d}{2 \pi}} \, 2^{d-1} L^*+2^d\right)
\left((1+o(1)) \frac{2 \sqrt{6\pi d}}{(\sin(\pi/12))^{d-2}} \right). $$

The resulting (asymptotic) approximation ratio is
$$ \alpha +(1+o(1)) \frac{2 \sqrt{3}\, d\, 2^{d}}{(\sin(\pi/12))^{d-2}}
= O \left( d \left( \frac{2}{\sin(\pi/12)} \right)^d \right)
= O \left( 7.73^d \right), $$
as claimed.

\section{Conclusion}  \label{sec:conclusion}

We revisited TSP with neighborhoods and obtained several
approximation algorithms: some for neighborhoods previously less
studied, such as lines and hyperplanes in $\RR^d$, and some
for the most previously studied, such as disks and balls.
Despite the progress, one may rightfully say that
the general problem of TSP with neighborhoods is far from resolved.
Interesting questions remain open regarding the structure of
optimal TSPN tours for lines, segments, balls, and hyperplanes, and the
degree of approximation achievable for these problems.
We record the simplest and most natural open questions on TSPN
that we could identify.

\begin{itemize}\itemsep -1pt

\item [(1)] Is there a polynomial-time exact algorithm for planes in $\RR^3$?

\item [(2)] Is there a constant approximation algorithm
for lines in $\RR^3$ (or in $\RR^d$ for $d\geq 3$)?
Can the current $O(\log^3 n)$ ratio be improved?

\item [(3)] Is there a constant approximation algorithm for planar convex bodies?

\item [(4)] Is there a constant approximation algorithm for parallel
segments in $\RR^3$? To start with, one can further assume that
the segments are pairwise-disjoint.

\item [(5)] Is there a constant approximation algorithm for balls (of arbitrary radii)
  in $\RR^3$?

\end{itemize}

\end{document}